\theoremstyle{plain}
\newtheorem{Theorem}{Theorem}
\newtheorem{Proposition}[Theorem]{Proposition}
\newtheorem{Lemma}[Theorem]{Lemma}
\newtheorem{Conjecture}[Theorem]{Conjecture}
\theoremstyle{definition}
\newtheorem{Definition}[Theorem]{Definition}
\theoremstyle{definition}
\newtheorem{definition}{Definition}
\newtheorem*{example}{Example} 
\newcommand{\bbR}{\mathbb{R}}
\newcommand{\bbZ}{\mathbb{Z}}
\newcommand{\fF}{{\mathfrak{F}}}
\newcommand{\fL}{{\mathfrak{L}}}
\newcommand{\cD}{{\mathcal{D}}}
\newcommand{\cK}{{\mathcal{K}}}
\newcommand{\End}{\mathrm{End}}
\newcommand{\cF}{{\mathcal{F}}}
\newcommand{\Hom}{\mathrm{Hom}}
\newcommand{\HH}{\mathrm{H}}
\newcommand{\Vect}{\mathrm{Vect}}
\newcommand{\half}{\frac{1}{2}}
\def\a{\alpha}
\def\l{\lambda}
 \title{Deforming the Lie Superalgebra $\mathcal{K}(1)$-Modules Of Symbols.}
 \author[A. F.]{ Faouzi AMMAR}
 \address[A. F.]{\newline
 Faouzi AMMAR\newline
 Facult\'e des Sciences, Universit\'e de Sfax, B.P.802, Sfax,
Tunisie.
 }
 \author[K. K.]{ Kaouthar KAMOUN}
 \address[K. K.]{\newline
 Kaouthar KAMOUN\newline
 Facult\'e des Sciences, Universit\'e de Sfax, B.P.802, Sfax,
Tunisie.
 }
\begin{document}

\maketitle\begin{abstract} We study non-trivial deformations of the
natural action of the Lie superalgebra $\mathcal{K}(1)$ of contact
vector fields on the (1,1)-dimensional superspace $\mathbb{R}^{1|1}$
on the space of symbols $\widetilde{{
\mathcal{S}}}_\delta^n=\bigoplus_{k=0}^n{\mathfrak{F}}_{\delta-\frac{k}{2}}$.
We calculate obstructions for integrability of infinitesimal
multi-parameter deformations and determine the complete local
commutative  algebra corresponding to the miniversal deformation.
\end{abstract}

\section{Introduction}

We consider the superspace $\mathbb{R}^{1|1}$ equipped with the
contact 1-form $\alpha=dx+\theta d\theta$ where $\theta$ is the odd
variable, the Lie superalgebra ${\mathcal{K}}(1)$ of contact
polynomial vector fields on $\mathbb{R}^{1|1}$(also called
superconformal Lie algebra see \cite{VicVan88}) and  the
${\mathcal{K}}(1)$-module of  symbols $\widetilde{{
\mathcal{S}}}_\delta^n=\bigoplus_{k=0}^n{\mathfrak{F}}_{\delta-\frac{k}{2}}$,
where ${\mathfrak{F}}_{\delta-\frac{k}{2}}$ is the module of the
weighted densities on $\mathbb{R}^{1|1}$. As Lie superalgebra
${\mathcal{K}}(1)$ is rigid as well as the Lie algebra of Virasoro
\cite{Roger07}, so one tries deformations of its modules. We will
use the framework of Fialowski ( \cite{Fialowski86} and
\cite{Fialowski88}) and Fialowski-Fuchs \cite{Fialowski99} (see also
\cite{Fuchs86}) and consider (multi-parameter) deformations over
complete local commutative algebras related to this deformation. The
first step of any approach to the deformation theory consists in the
determination of infinitesimal deformations. According to
Nijenhuis-Richardson \cite{NijenRichar67}, infinitesimal
deformations of the action of a Lie algebra on some module are
classified by the first cohomology space of the Lie algebra with
values in the module of endomorphisms of that module. In our case:
\begin{equation*}
\mathrm{H}^1_{\rm diff}\left({\mathcal{K}}(1);\End_{\rm
diff}(\widetilde{
\mathcal{S}}^n_{\delta})\right)=\oplus_{\lambda,k}\mathrm{H}^1_{\rm
diff}\left({\mathcal{K}}(1);\frak{D}_{\lambda,\lambda+k}\right),
\end{equation*}where  $\frak{D}_{\lambda,\mu}$ is the superspace of linear differential
operators from the superspace of weighted densities
${\mathfrak{F}}_{\l}$ to ${\mathfrak{F}}_{\mu}$, and hereafter
$2(\delta-\lambda), 2(\delta-\mu)\in\left\{0,\,1,\,\dots,\,n\right\}
$.

While the obstructions for integrability of this infinitesimal
deformations belong to the second cohomology space
\begin{equation*}
\mathrm{H}^2_{\rm diff}\left({\mathcal{K}}(1);\End_{\rm
diff}(\widetilde{
\mathcal{S}}^n_{\delta})\right)=\oplus_{\lambda,k}\mathrm{H}^2_{\rm
diff}\left({\mathcal{K}}(1);\frak{D}_{\lambda,\lambda+k}\right).
\end{equation*}
The odd first space $\mathrm{H}^1_{\rm
diff}\left({\mathcal{K}}(1);\frak{D}_{\lambda,\lambda+k}\right)_1$
was calculated in \cite{BasBK}: our task therefore, is to calculate
the even first space $\mathrm{H}^1_{\rm
diff}\left({\mathcal{K}}(1);\frak{D}_{\lambda,\lambda+k}\right)_0$
and the obstructions. We will prove that all the multi-parameter
deformations of the action are in fact of degree 1 or 2 in the
parameters  of deformation.

We shall give concrete explicit examples of the deformed action.

\section{Definitions and Notations}

\subsection{The Lie superalgebra of contact vector fields on
$\mathbb{R}^{1|1}$}

 Let $\mathbb{R}^{1\mid 1}$ be the superspace with
coordinates $(x,~\theta),$ where $\theta$ is the odd variables
$(\theta^2=0)$. We consider the superspace
$\mathbb{R}^{1|1}[x,\theta]$ of superpolynomial functions on
$\mathbb{R}^{1|1}$.
\begin{equation*}
\mathbb{R}^{1|1}[x,\theta]=\left\{F=f_0+f_1\theta:~f_0\hbox{ and
}f_1 \hbox{ are in } \mathbb{R}[x]\right\}
\end{equation*} where $\mathbb{R}[x]$ is the space of polynomial
functions on $\mathbb{R}$. The superspace
$\mathbb{R}^{1|1}[x,\theta]$ has a structure of superalgebra given
by the contact bracket
\begin{equation}
\{F,G\}=FG'-F'G+\frac{1}{2}(-1)^{p(F)+1}\overline{\eta}(F)\cdot
\overline{\eta}(G),
\end{equation}where $\eta=\frac{\partial}{\partial
{\theta}}+\theta\frac{\partial}{\partial x}$,
$\overline{\eta}=\frac{\partial}{\partial
{\theta}}-\theta\frac{\partial}{\partial x}$ and $p(F)$ is the
parity of $F$. Remark that $\eta\circ\eta=\frac{\partial}{\partial
x}$, so $\eta$ is sometimes called "square root" of
$\frac{\partial}{\partial x}$.

Any contact structure on $\mathbb{R}^{1\mid 1}$ can be defined by
the following $1$-form:
\begin{equation*}
\a=dx+\theta d\theta.
\end{equation*}
Let $\mathrm{Vect_P}(\mathbb{R}^{1|1})$ be the superspace of
superpolynomial vector fields on ${\mathbb{R}}^{1|1}$:
\begin{equation*}\mathrm{Vect_P}(\mathbb{R}^{1|1})=\left\{F_0\partial_x
+  F_1\partial \mid
~F_i\in\mathbb{R}^{1|n}[x,\theta]\right\},\end{equation*} where
$\partial$ stands for $\frac{\partial}{\partial{\theta}}$ and
$\partial_x$ stands for $\frac{\partial}{\partial x} $, and consider
the superspace $\mathcal{K}(1)$ of contact polynomial vector fields
on ${\mathbb{R}}^{1|1}$ defined by:
\begin{equation*}\mathcal{K}(1)=\left\{v\in\mathrm{Vect_p}(\mathbb{R}^{1|1})
~:~v\a=F\a,~~\hbox{for
some}~F\in\mathbb{R}^{1|1}[x,\theta]\right\},\end{equation*} where
$v\a$ is the Lie derivative of $\a$ along the vector fields $v$. Any
contact superpolynomial vector field on ${\mathbb{R}}^{1|1}$ can be
given by the following explicit form:
\begin{equation*}
v_F=F\partial_x
+\frac{1}{2}(-1)^{p(F)+1}\overline{\eta}(F)\overline{\eta},\;\text{
\ \ where $F\in \mathbb{R}^{1|1}[x,\theta]$.}
\end{equation*}

\subsection{The space of polynomial weighted densities on $\mathbb{R}^{1|1}$}

Recall the definition of the $\Vect_\mathrm{p}(\mathbb{R})$-module
of polynomial weighted densities on $\mathbb{R},$ where
$\Vect_\mathrm{p}(\mathbb{R})$ is the Lie algebra of polynomial
vector fields on $\mathbb{R}$. Consider the $1$-parameter action of
$\Vect_\mathrm{p}(\mathbb{R})$ on the space of polynomial functions
$\bbR[x]$, given by
\begin{equation*}
L^{\lambda}_{X\partial_x}(f)= Xf'+\lambda X'f,
\end{equation*}
where $\l\in\bbR$. Denote by $\cF_\l$ the
$\Vect_\mathrm{p}(\mathbb{R})$-module structure on $\bbR[x]$ defined
by this action. Geometrically, $\cF_\l$ is the space of polynomial
weighted densities of weight $\l$ on $\mathbb{R}$, i.e.,
\begin{equation}\label{Flamda}
\cF_\l=\{f(x)(dx)^{\lambda} | f\in \bbR[x]\}.
\end{equation}
  Now, in super setting, we have an  analogous definition
of weighted densities (see \cite{BasBK}) with \ $dx$ \ replaced by
$\a=dx+\theta d\theta.$ Consider the $1$-parameter action of
$\cK(1)$ on $\mathbb{R}[x,\theta]$ given by the rule:
\begin{equation}
\label{superaction} \fL^{\lambda}_{v_F}(G)=\fL_{v_F}(G) + \lambda
F'\cdot G,
\end{equation}
where $F,\, G\in\mathbb{R}[x,\theta]$ and $F'=\partial_{x}F$ or, in
components:
\begin{equation}
\label{deriv}
\frak{L}^{\lambda}_{v_F}(G)=L^{\lambda}_{a\partial_x}(g_0)+\frac{1}{2}~bg_1
+\left(L^{\lambda+\frac{1}{2}}_{a\partial_x}(g_1)+\lambda
g_0b'+\half g'_0 b\right)\theta,
\end{equation}
where $F=a+b\theta $, $G=g_0+g_1\theta $. In particular, we have
\begin{gather*}\left\{\begin{array}{llllllll}
\frak{L}^{\lambda}_{v_a}(g_0)=L^{\lambda}_{a\partial_x}(g_0),\,\,\,\,
\frak{L}^{\lambda}_{v_a}(g_1\theta)=\theta
L^{\lambda+\frac{1}{2}}_{a\partial_x}(g_1),\,\,\\[10pt]
\frak{L}^{\lambda}_{v_{b\theta}}(g_0)=(\lambda g_0b'+\half g'_0
b)\theta\hbox{ ~~and~~
}\frak{L}^{\lambda}_{v_{b\theta}}(g_1\theta)=\half bg_1.
\end{array}\right.
\end{gather*}
We denote this $\cK(1)$-module by ${\mathfrak{F}}_{\l}$, the space
of all polynomial weighted densities on $\mathbb{R}^{1|1}$ of weight
$\l$:
\begin{equation}
\label{densities} {\mathfrak{F}}_\l=\left\{\phi=f(x,\theta)\a^{\l}
\mid f(x,\theta) \in\mathbb{R}[x,\theta]\right\}.
\end{equation}
Let  $\frak{D}_{\lambda,\mu}
:=\Hom_{\text{diff}}(\fF_{\l},\fF_{\mu})$, be the $\cK(1)$-module of
linear differntial superoperators, the $\cK(1)$-action on this
superspace is given by:
\begin{equation}\label{d-action}
\fL^{\l,\mu}_{v_F}(A)=\fL^{\mu}_{v_F}\circ A-(-1)^{p(A)p(F)} A\circ
\fL^{\l}_{v_F}.
\end{equation}
Obviously:\label{decom}
\begin{itemize}
\item[1)] The adjoint $\cK(1)-$module, is isomorphic to
${\mathfrak{F}}_{-1}.$ \item[2)] As a $\Vect_p(\mathbb{R})$-module,
${\mathfrak{F}}_{\l}\simeq\cF_\l \oplus \Pi(\cF_{\l+\half})$, where
$\cF_\l$ is  the $\Vect_p(\mathbb{R})$-module of polynomial weighted
densities  of weight $\l$ and $ \Pi$ is the functor of the change of
parity.
\end{itemize}
\begin{Proposition}
 As a $\Vect_p({\mathbb{R}})$-module, we have for the
homogeneous relative parity components:
\begin{equation*}(\frak{D}_{\lambda,\mu})_0\simeq\cD_{\lambda,\mu}
\oplus
\cD_{\lambda+\frac{1}{2},\mu+\frac{1}{2}}\\ \hbox{and }\\
(\frak{D}_{\lambda,\mu})_1\simeq\Pi(\cD_{\lambda+\frac{1}{2},\mu}
\oplus\cD_{\lambda,\mu+\frac{1}{2}}).
\end{equation*}
\end{Proposition}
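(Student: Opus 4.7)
The plan is to deduce the decomposition directly from item 2) of the remark immediately preceding the proposition, namely $\fF_\lambda \simeq \cF_\lambda \oplus \Pi(\cF_{\lambda+\half})$ as $\Vect_p(\mathbb{R})$-modules, together with the standard behaviour of the internal $\Hom$ functor under direct sums and parity shifts.

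First I would pin down the restricted action. Restricting the $\cK(1)$-action \eqref{superaction} to the subalgebra $\Vect_p(\mathbb{R})=\{v_a:a\in\mathbb{R}[x]\}\subset\cK(1)$ and using the explicit componentwise formulas
\[
\frak{L}^{\lambda}_{v_a}(g_0)=L^{\lambda}_{a\partial_x}(g_0),\qquad \frak{L}^{\lambda}_{v_a}(g_1\theta)=\theta L^{\lambda+\frac{1}{2}}_{a\partial_x}(g_1),
\]
displayed just above, one sees that the summands $\cF_\lambda$ (functions of $x$ only) and $\Pi(\cF_{\lambda+\half})$ (the coefficient of $\theta$, with shifted weight and reversed parity) are each stable and carry exactly the weighted-density actions indicated. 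This is the content of the quoted isomorphism, which I take as given.

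Second, I would apply the general fact that for $\mathbb{Z}_2$-graded modules $V=V_0\oplus V_1$ and $W=W_0\oplus W_1$ one has
\[
\Hom(V,W)=\bigoplus_{i,j\in\{0,1\}}\Hom(V_i,W_j),
\]
where the parity of $\Hom(V_i,W_j)$ is $i+j\pmod 2$, and a $\Pi$ on either source or target yields a $\Pi$ on the corresponding $\Hom$ summand. Specialising to $V=\fF_\lambda$ and $W=\fF_\mu$ and writing each as $\cF_\bullet\oplus\Pi(\cF_{\bullet+\half})$, the four summands are $\cD_{\lambda,\mu}$, $\cD_{\lambda+\half,\mu+\half}$, $\Pi(\cD_{\lambda,\mu+\half})$, $\Pi(\cD_{\lambda+\half,\mu})$. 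The first two are parity-preserving (hence even in $\frak{D}_{\lambda,\mu}$) and the last two are parity-reversing (hence odd), yielding the claimed decomposition.

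Finally I would note that this decomposition is $\Vect_p(\mathbb{R})$-equivariant by construction: the action on $\frak{D}_{\lambda,\mu}$ given by \eqref{d-action} restricts, for $v_a\in\Vect_p(\mathbb{R})\subset\cK(1)$ with $p(v_a)=0$, to the ordinary conjugation action $A\mapsto \fL^\mu_{v_a}\circ A-A\circ\fL^\lambda_{v_a}$, which preserves each of the four $\Hom$-summands because the underlying $\fF$-decomposition is $\Vect_p(\mathbb{R})$-stable; each summand is then precisely the $\Vect_p(\mathbb{R})$-module of differential operators between the respective factors, i.e.\ a $\cD_{\bullet,\bullet}$. There is no genuine obstacle: the proposition is a bookkeeping consequence of the $\Vect_p(\mathbb{R})$-splitting of $\fF_\lambda$, and the only thing to be careful about is the placement of the parity-change functor $\Pi$ on each of the four pieces.
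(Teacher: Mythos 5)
Your proof is correct and is the standard argument implicitly behind the statement: the paper offers no proof at all, treating the proposition as an immediate consequence of the splitting $\fF_\l\simeq\cF_\l\oplus\Pi(\cF_{\l+\half})$ recorded just before it. Your parity bookkeeping on the four $\Hom$-summands and the equivariance check for the restricted action \eqref{d-action} are exactly what is needed.
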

\subsection{The supertransvectants: explicit formula}


\begin{definition}(see\cite{GarOvs07}) The supertransvectants are the bilinear
$\mathfrak{osp}(1|2)$-invariant maps
$$\frak{J}^{\alpha,\beta}_{k}:\frak{F}_{\alpha}\otimes\frak{F}_{\beta}\longrightarrow
\frak{F}_{\alpha+\beta+k}$$ where $k\in
\{0,\frac{1}{2},1,\frac{3}{2},\ldots\}$. These operators were
calculated in \cite{GierThei93} (see also \cite{Huang94}), let us
give their explicit formula.

One has \begin{equation}\label{supertransvectants}
   \frak{J}^{\alpha,\beta}_{k}(f,g)=\displaystyle\sum_{i+j=2k} J_{i,j}^k
   \overline{\eta}^i(f) \overline{\eta}^j(g)
\end{equation}where the numeric coefficients are
\begin{equation}\label{numeric coefficients}
J_{i,j}^k=(-1)^{\left(\left[\frac{j+1}{2}\right]+j(i+p(f))\right)}
\frac{ \left(\begin{array}{c}
[k] \\
$[$\frac{2j+1+(-1)^{2k}}{4}$]$ \\
\end{array}
\right) \left(\begin{array}{c}
2\alpha+[k-\frac{1}{2}] \\
$[$\frac{2j+1-(-1)^{2k}}{4}$]$ \\
\end{array}
\right)} {\left(
\begin{array}{cc}
2\beta+[\frac{j-1}{2}]& \\
$[$\frac{j+1}{2}$]$& \\
\end{array}
\right)}
\end{equation}where $[a]$ denotes the integer part of $a \in
\mathbb{R}$, as above, the binomial coefficients $\left(%
\begin{array}{c}
  a \\
  b \\
\end{array}%
\right) $ are well-defined if $b$ is integer. It can be cheked
directly that those operators are, indeed,
$\mathfrak{osp}(1|2)$-invariant.
\end{definition}

\subsection{The first cohomology space $\HH^1_{\rm diff}(\cK(1),\frak{D}_{\l,\mu})$}
 Let
us first recall some fundamental concepts from cohomology
theory~(\cite{Fuchs86}). Let $\frak{g}=\frak{g}_0\oplus \frak{g}_1$
be a Lie superalgebra acting on a super vector space $V=V_0\oplus
V_1$. The space $\Hom(\frak{g},\, V)$ is ($\bbZ/2\bbZ$)-graded via
\begin{equation}
\label{grade} \Hom(\frak{g}, V)_b=\displaystyle \oplus_{a\in
(\bbZ/2\bbZ)}\Hom(\frak{g}_a, V_{a+b}); \; b\in \bbZ/2\bbZ.
\end{equation}
Let \begin{equation*}\begin{array}{c}
                       Z^1(\frak{g},~V)=\left\{\gamma\in
\Hom(\frak{g},~V);\, \gamma([g,~h])= (-1)^{p(g)p(\gamma)}g\cdot
\gamma(h) \right.\\ \left. -(-1)^{p(h)(p(g)+p(\gamma))}h\cdot
\gamma(g), \, \forall g,h\in ~\frak{g}\right\}
                     \end{array}
\end{equation*}
be the space of $1$-cocycles for the Chevalley-Eilenberg
differential. According to the $\bbZ/2\bbZ$-grading (\ref{grade}),
any $1$-cocycle $\gamma\in Z^1(\frak{g}; V)$, is broken to
$(\gamma',\gamma'')\in \Hom(\frak{g}_0,\, V)\oplus
\Hom(\frak{g}_1,\, V)$.

The first cohomology space $\HH^1_{\rm
diff}(\cK(1),\frak{D}_{\lambda,\mu})$ inherits the
($\bbZ/2\bbZ$)-grading from (\ref{grade}) and it decomposes into odd
and even parts as follows:
\begin{equation*}
\HH^1_{\rm diff}(\cK(1),\frak{D}_{\lambda,\mu})=\HH^1_{\rm
diff}(\cK(1),\frak{D}_{\lambda,\mu})_0\oplus \HH^1_{\rm
diff}(\cK(1),\frak{D}_{\lambda,\mu})_1.
\end{equation*}
The odd first space $\mathrm{H}^1_{\rm
diff}\left({\mathcal{K}}(1);\frak{D}_{\lambda,\lambda+k}\right)_1$
was calculated in \cite{BasBK}, we calculate, here, the even first
space $\mathrm{H}^1_{\rm
diff}\left({\mathcal{K}}(1);\frak{D}_{\lambda,\lambda+k}\right)_0$.
\begin{Lemma}\label{sa}
The 1-cocycle $\gamma$ is a coboundary over $\cK(1)$ if and only if
$\gamma'$ is a coboundary over $\Vect_p(\mathbb{R})$.
\end{Lemma}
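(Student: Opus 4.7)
The ``only if'' direction is immediate: the restriction of any $\cK(1)$-coboundary $\partial A$ to the subalgebra $\cK(1)_0\simeq\Vect_p(\mathbb{R})$ is automatically a $\Vect_p(\mathbb{R})$-coboundary. So my focus is the converse. Assuming $\gamma'=\partial A$ on $\Vect_p(\mathbb{R})$ for some $A\in\frak{D}_{\lambda,\mu}$, I will consider $\tilde\gamma:=\gamma-\partial A$ as a $1$-cocycle on $\cK(1)$ which, by construction, vanishes on $\cK(1)_0$. The task reduces to showing that every such $\tilde\gamma$ is itself a $\cK(1)$-coboundary.

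The key observation I would exploit is that $\cK(1)_1$ is cyclically generated from the single vector $v_\theta$ under the adjoint action of $\cK(1)_0$; indeed a direct contact-bracket computation gives $[v_a,v_\theta]=-\tfrac12 v_{a'\theta}$. Combined with $\tilde\gamma|_{\cK(1)_0}=0$, the super cocycle identity then expresses $\tilde\gamma$ on all of $\cK(1)_1$ in terms of $\mathcal{A}:=\tilde\gamma(v_\theta)$ via
\[
\tilde\gamma(v_{c\theta})=v_{-2\int c}\cdot\mathcal{A}.
\]
Consistency of this formula under the choice of polynomial antiderivative $\int c$ forces $[\partial_x,\mathcal{A}]=0$, and applying the cocycle identity to the odd-odd bracket $[v_\theta,v_\theta]=v_{1/2}$ forces $\fL^{\lambda,\mu}_{v_\theta}(\mathcal{A})=0$.

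To finish, I would exhibit a $\Vect_p(\mathbb{R})$-invariant $B\in\frak{D}_{\lambda,\mu}$ with $\fL^{\lambda,\mu}_{v_\theta}(B)=\mathcal{A}$. Then $\partial B$ vanishes on $\cK(1)_0$ (by invariance of $B$), agrees with $\tilde\gamma$ at $v_\theta$, and by the determination formula above it coincides with $\tilde\gamma$ throughout $\cK(1)_1$; hence $\tilde\gamma=\partial B$. I anticipate the main obstacle to be this final step: for each pair $(\lambda,\mu)$ one must produce $B$ explicitly, using the $\Vect_p(\mathbb{R})$-module decomposition of $\frak{D}_{\lambda,\mu}$ recalled in the preceding Proposition, together with the additional odd-odd consistency conditions coming from $[v_{c\theta},v_{c'\theta}]=v_{cc'/2}$, in order to verify that every admissible $\mathcal{A}$ lies in the image of the map $B\mapsto\fL^{\lambda,\mu}_{v_\theta}(B)$ restricted to $\Vect_p(\mathbb{R})$-invariants. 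The remaining pieces --- generation of $\cK(1)_1$ from $v_\theta$, derivation of the two scalar constraints on $\mathcal{A}$, and the sign bookkeeping --- are routine applications of the super cocycle identity.
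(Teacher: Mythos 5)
The paper itself offers no argument for this lemma --- it simply cites the preprint \cite{BasBK} --- so there is no in-paper proof to measure you against; your proposal must therefore stand on its own, and as it stands it has a genuine gap. Your reduction is sound: the ``only if'' direction is indeed immediate, the bracket computations $[v_a,v_\theta]=-\tfrac12 v_{a'\theta}$, $[v_\theta,v_\theta]=\tfrac12 v_1$ and $[v_{b\theta},v_{c\theta}]=\tfrac12 v_{bc}$ are correct, and it is true that a cocycle $\tilde\gamma$ vanishing on $\mathcal{K}(1)_0$ is completely determined on $\mathcal{K}(1)_1$ by $\mathcal{A}=\tilde\gamma(v_\theta)$, subject to the constraints you list. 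But the step you defer as the ``main obstacle'' --- that every $\mathcal{A}$ satisfying these constraints equals $\fL^{\lambda,\mu}_{v_\theta}(B)$ for some $\Vect_p(\mathbb{R})$-invariant $B\in\frak{D}_{\lambda,\mu}$ --- is not a routine verification; it is the entire nontrivial content of the lemma. Note that $\delta B$ vanishes on $\mathcal{K}(1)_0$ precisely when $B$ is $\Vect_p(\mathbb{R})$-invariant, and for generic $(\lambda,\mu)$ the space of $\Vect_p(\mathbb{R})$-invariant differential operators inside $\frak{D}_{\lambda,\mu}$ is zero (invariant operators $\cF_\lambda\to\cF_\mu$ exist only for exceptional weights). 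So your argument, if completed, must show that the odd-odd consistency conditions force $\mathcal{A}=0$ generically, and must separately handle the resonant weights where invariant $B$'s do exist. That requires an explicit computation with the coefficients of $\mathcal{A}$ as a differential operator, which you have not carried out.

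A further caution: your determination formula $\tilde\gamma(v_{c\theta})=\fL_{v_{-2\int c}}\mathcal{A}$ plus the constraints $\fL_{v_1}\mathcal{A}=\fL_{v_\theta}\mathcal{A}=0$ do not by themselves exhaust the cocycle identity; the full family of relations coming from $[v_{b\theta},v_{c\theta}]=\tfrac12 v_{bc}$ for all polynomials $b,c$ must be imposed (you mention them only in passing), and it is exactly these that carry the information needed to kill $\mathcal{A}$. An equivalent and perhaps cleaner packaging, closer to what is done in the cited literature, is to observe that the even-even cocycle relation makes $\gamma''$ a $\Vect_p(\mathbb{R})$-equivariant map $\Pi(\cF_{-1/2})\to\frak{D}_{\lambda,\mu}$, so that one can invoke the classification of such intertwining operators before imposing the odd-odd relations; either way, the classification/computation step cannot be omitted. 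As written, your proposal is a correct reduction of the lemma to its essential claim, not a proof of it.
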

\begin{proof}
See \cite{BasBK}.
\end{proof}
The following theorem
 recalls the result.
\begin{Theorem}
\label{th1}~~1) The space $\HH^1_{\rm
diff}(\cK(1),\frak{D}_{\lambda,\mu})_0$ is isomorphic to the
following:
\begin{equation*}
\HH^1_{\rm diff}(\cK(1),\frak{D}_{\lambda,\mu})_{0}\simeq\left\{
\begin{array}{ll}
\bbR&\makebox{ if }~~\mu-\lambda=0, \\[2pt]\bbR&\makebox{ if }~~\mu-\lambda=2 \makebox{ for }
 \lambda\neq-1,
\\[2pt]\bbR&\makebox{ if }~~\mu-\lambda=3\makebox{ for  }~~\lambda=0 \makebox{ or
}~~\lambda= \frac{-5}{2},\\[2pt]\bbR&\makebox{ if }~~\mu-\lambda=4\makebox{ for }
~~ \lambda=\frac{-7\pm\sqrt{33}}{4}
\\[2pt]0&\makebox { otherwise. }
\end{array}
\right.
\end{equation*} The space $\HH^1_{\rm diff}(\cK(1),\frak{D}_{\lambda,\mu})_{0}$ is
generated by the cohomology classes of the $1$-cocycles:

$\bullet$ For $\lambda=\mu$ the generator can be chosen as follows:
\begin{equation*}
\gamma_{\lambda,\lambda}(v_G)(F\a^{\lambda})=G'F\a^{\lambda},
\end{equation*}
where, here and below, $F,\,G\in \mathbb{R}^{1|1}[x,\theta].$

$\bullet$ For $\mu-\lambda=2$ and $\lambda\neq-1$  the generator can
be chosen as follows:
\begin{equation*}
\gamma_{\lambda,\lambda+2}(v_G)(F\a^{\lambda})=(2\lambda \  G^3F
+3(-1)^{p(G)} \overline{\eta}(G'')\overline{\eta}(F))\a^{\lambda+2},
\end{equation*}

$\bullet$ For $\mu-\lambda=3$ and $\lambda= 0$ the generator can be
chosen as follows:
\begin{equation*}
\begin{array}{ll}
  \gamma_{0,3}(v_G)(F\a^{0})=&\left(G^4F -(-1)^{p(G)}
\overline{\eta}(G^3)\overline{\eta}(F)+G^3F''+(-1)^{p(G)}\frac{3}{2}
\overline{\eta}(G'')\overline{\eta}(F')\right)\a^{3},\\
\end{array}
\end{equation*}

$\bullet$ For $\mu-\lambda=3$ and $\lambda= \frac{-5}{2}$ the
generator can be chosen as follows:
\begin{equation*}
\begin{array}{ll}
 \gamma_{\frac{-5}{2},\frac{1}{2}}(v_G)(F\a^{\frac{-5}{2}})=&\left(G^4F
-(-1)^{p(G)}\overline{\eta}(G^3)\overline{\eta}(F)+G^3F'
 -(-1)^{p(G)}\frac{3}{8}
\overline{\eta}(G'')\overline{\eta}(F')\right) \ \a^{\frac{1}{2}},
\end{array}
\end{equation*}

$\bullet$ For $\mu-\lambda=4$ and $\lambda=
\frac{-7\pm\sqrt{33}}{4}$  the generator can be chosen as follows:
\begin{equation*}
\begin{array}{c}
  \gamma_{\lambda,\lambda+4}(v_G)(F\a^{\lambda})=\left(G^5F
+(-1)^{p(G)}\frac{5}{2\lambda}
\overline{\eta}(G^4)\overline{\eta}(F)-\frac{5}{\lambda}G^4F'\right. \\
 \left. -(-1)^{p(G)}\frac{20}{\lambda(2\lambda+1)}
\overline{\eta}(G^3)\overline{\eta}(F')\right)  \a^{\lambda+4}. \\
\end{array}
\end{equation*}

\medskip

2)The space $\HH^1_{\rm diff}(\cK(1),\frak{D}_{\lambda,\mu})_1$ is
isomorphic to the following:
\begin{equation} \label{} \HH^1_{\rm
diff}(\cK(1),\frak{D}_{\lambda,\mu})_1\simeq\left\{
\begin{array}{ll}
\mathbb{R}^2 & \hbox{ if }~~\lambda=0,~\mu=\frac{1}{2},\\[2pt]
\mathbb{R}&\hbox{ if }~~\mu=\lambda+\frac{3}{2},\\[2pt]
\mathbb{R} &\hbox{ if }~~\mu=\lambda+\frac{5}{2}~\hbox{ for all }~\lambda,\\[2pt]
0&\hbox{ otherwise. }
\end{array}
\right.
\end{equation}
The space $\HH^1_{\rm diff}(\cK(1),\frak{D}_{\lambda,\mu})_{1}$ is
generated by the cohomology classes of the $1$-cocycles:

$\bullet$ For $\lambda=0$ and $\mu=\frac{1}{2} $, the generators can
be chosen as follows:
\begin{equation*}
\begin{array}{ccc}
  \gamma_{0,\frac{1}{2}}(v_G)(F)= {\eta}(G')F\a^{\half} & \hbox{and} &
  \widetilde{\gamma}_{0,\frac{1}{2}}(v_G)(F)=
(-1)^{p(F)}{\eta}(G'F)\a^{\half}.\\
\end{array}
\end{equation*}

$\bullet$ For $\lambda=-\frac{1}{2}~~\hbox{and}~~
\mu-\lambda=\frac{3}{2}$, the generator can be chosen as follows:
\begin{equation*}
\begin{array}{ll}
\gamma_{-\frac{1}{2},1}(v_G)(F\a^{-\half})=&\Big(
\frac{3}{2}\big({\eta}(G'')+(-1)^{p(G)}{\eta}(G'')\big)F-
\frac{1}{2}\big({\eta}(G)-(-1)^{p(G)}{\eta}(G)\big)F''+(-1)^{p(F)}\\[10pt]
&\big({\eta}(G')F'+\frac{1}{2}(
G''+(-1)^{p(G)}G''){\eta}(F)\big)+(-1)^{p(G)+1}{\eta}(G'')\\[10pt]&
\big(F+(-1)^{p(F)}F\big)\Big)\a^{1}\\[10pt]
\end{array}
\end{equation*}

$\bullet$ For $\lambda\neq-\frac{1}{2}~~\hbox{and}~~
\mu-\lambda=\frac{3}{2}$, the generator can be chosen as follows:
\begin{equation*}
\gamma_{\lambda,\lambda+\frac{3}{2}}(v_G)(F\a^{\lambda})=
\left(\overline{\eta}(G'')F\right)\a^{\lambda+\frac{3}{2}}.
\end{equation*}

$\bullet$ For  $\mu-\lambda=\frac{5}{2}$, the generator can be
chosen as follows:
\begin{equation*}\begin{array}{cc}
 \gamma_{\lambda,\lambda+\frac{5}{2}}(v_G)(F\a_1^{\lambda})= &(2\lambda G^3F+
  3(-1)^{p(G)}\overline{\eta}(G'')\overline{\eta}(F))\a^{\lambda+\frac{5}{2}}. \\
\end{array}
\end{equation*}
\end{Theorem}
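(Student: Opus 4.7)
The plan is to establish Part 1 (the even case), since Part 2 is quoted from \cite{BasBK}. By Lemma \ref{sa}, the restriction map $\gamma\mapsto\gamma'$ faithfully detects cohomology classes in $\HH^1_{\rm diff}(\cK(1),\frak{D}_{\l,\mu})_0$, so I would work inside $\HH^1_{\rm diff}(\Vect_p(\bbR),(\frak{D}_{\l,\mu})_0)$. Using the $\Vect_p(\bbR)$-decomposition of the preceding Proposition, the classification reduces to pairs of scalar $\Vect_p(\bbR)$-cocycles valued in $\cD_{\l,\mu}$ and $\cD_{\l+\frac{1}{2},\mu+\frac{1}{2}}$. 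The classical Feigin--Fuchs result, that $\HH^1_{\rm diff}(\Vect_p(\bbR),\cD_{\l,\mu})$ is one-dimensional exactly when $\mu-\l\in\{0,2,3,4\}$ (with standard resonance exclusions for $k\ge 3$) and zero otherwise, combined with the $\tfrac{1}{2}$-shift between the two summands, produces the candidate list of admissible $(k,\l)$ appearing in the theorem.

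For each candidate I would write an ansatz for $\gamma'$ from the known $\Vect_p(\bbR)$-generators, and then enforce the full $\cK(1)$-cocycle condition on mixed pairs $(v_F,v_{b\theta})$ where $v_F$ is even and $v_{b\theta}$ is odd. This yields a functional equation coupling $\gamma'$ to the odd extension $\gamma''$. Since the module of differential operators admits an $\mathfrak{osp}(1|2)$-equivariant decomposition into supertransvectants $\frak{J}^{\a,\b}_k$ from~\eqref{supertransvectants}, I would look for $\gamma''$ among linear combinations of terms $G^{(m)}F^{(\ell)}$ and $\overline{\eta}(G^{(m)})\overline{\eta}(F^{(\ell)})$; these are precisely the building blocks visible in the stated generators.

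The main obstacle is the $k=4$ case, where matching the two $\Vect_p(\bbR)$-summands against the super cocycle identity forces a quadratic constraint on $\l$ whose roots are exactly $(-7\pm\sqrt{33})/4$; an analogous linear compatibility in the $k=3$ case singles out $\l\in\{0,-5/2\}$. Once the admissible $(k,\l)$ are isolated, verifying the cocycle identity for the explicit formulas and checking non-triviality through Lemma \ref{sa} is a direct, if lengthy, computation. For all other $(k,\l)$, solving the coupling equation forces $\gamma''$ (and hence, by Lemma \ref{sa}, also $\gamma$) to be a coboundary, which completes the classification.
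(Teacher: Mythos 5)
Your strategy for Part 1 --- restrict to $\Vect_p(\bbR)$ via Lemma \ref{sa}, apply the Feigin--Fuchs classification of $\HH^1_{\rm diff}(\Vect_p(\bbR),\cD_{\lambda,\mu})$ to each summand of $(\frak{D}_{\lambda,\mu})_0\simeq\cD_{\lambda,\mu}\oplus\cD_{\lambda+\frac{1}{2},\mu+\frac{1}{2}}$ to get a candidate list, then couple to the odd part through the mixed cocycle identities --- is not the paper's route, and it fails concretely at $\mu-\lambda=4$. The Feigin--Fuchs resonances for $\mu-\lambda=4$ are $\lambda=\frac{-5\pm\sqrt{19}}{2}$ (roots of $2\lambda^2+10\lambda+3$), so your candidate list for $k=4$ is $\big\{\frac{-5\pm\sqrt{19}}{2}\big\}\cup\big\{\frac{-6\pm\sqrt{19}}{2}\big\}$ after the $\frac{1}{2}$-shift; none of these equals $\frac{-7\pm\sqrt{33}}{4}$, which are the roots of $2\lambda^2+7\lambda+2$. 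Hence the claim that the classical classification ``produces the candidate list appearing in the theorem'' is false for $k=4$, and your subsequent assertion that the coupling equations single out $\frac{-7\pm\sqrt{33}}{4}$ is incompatible with your own first step: any $\lambda$ surviving both filters would have to lie in the (empty) intersection of the two root sets, so the argument as written would yield $\HH^1_{\rm diff}(\cK(1),\frak{D}_{\lambda,\lambda+4})_0=0$ for all $\lambda$, contradicting the statement. (The $k=3$ case does work out your way, since $\{0,-2\}\cup\{-\frac{1}{2},-\frac{5}{2}\}$ contains the answer $\{0,-\frac{5}{2}\}$, which is probably why the problem is easy to miss.)

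The paper sidesteps this entirely for $\mu-\lambda\geq 2$: it normalizes the cocycle to vanish on $\mathfrak{sl}(2)$ (possible because the component $\gamma_{000}$ has order $\geq 2$), invokes Lemma \ref{sl2} to upgrade this to $\mathfrak{osp}(1|2)$-invariance up to coboundary, identifies the cocycle with a scalar multiple of the single supertransvectant $\frak{J}^{-1,\lambda}_{k+1}$, and only then imposes the full super cocycle condition on that one candidate --- this is where $2\lambda^2+7\lambda+2=0$ comes from. If you insist on the restriction route you must confront the resulting tension with Lemma \ref{sa} head-on: either the restriction of $\gamma_{\lambda,\lambda+4}$ to $\Vect_p(\bbR)$ represents a nontrivial classical class (forcing $\lambda$ into the $\sqrt{19}$ family) or it is a $\Vect_p(\bbR)$-coboundary (forcing, by Lemma \ref{sa}, triviality of $\gamma_{\lambda,\lambda+4}$ itself). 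Your proposal does not resolve this, so the $\mu-\lambda=4$ case is not established.
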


\begin{proof}
The odd cohomology $\HH^1_{\rm
diff}(\cK(1),\frak{D}_{\lambda,\mu})_{1}$ was calculated in
\cite{BasBK}.

Now, we are interested in the  even cohomology.  The adjoint
$\mathcal{K}(1)$-module is  $\Vect_p(\mathbb{R})$- isomorphic to
$\Vect_p(\mathbb{R})\oplus\Pi(\mathcal{F}_{-\half})$, so, the even
1-cocycle $\gamma_0$ decomposes into two components: $\gamma_0 =
(\gamma_{00},\gamma_{11})$ where
\begin{align*}\left\{\begin{array}{lllllll}
\gamma_{00}:\Vect_p(\mathbb{R})&\rightarrow&(\frak{D}_{\l,\mu})_0, \\
 \gamma_{11}:\cF_{-\half}&\rightarrow&(\frak{D}_{\l,\mu})_1,
  \end{array}\right.
\end{align*}
\newline $\bullet$ For $\lambda=\mu$, a straightest
computation shows that $\gamma_{\lambda,\lambda}$ is prolongation of
$c_{\lambda,\lambda}(X,F)=X'F$ calculated by Feigen and Fuchs in
\cite{Fialowski99}.\newline $\bullet$ For $\mu-\lambda\geq 2$.

We have $(\frak{D}_{\lambda,\mu})_0=\mathcal{D}_{\lambda,\mu}\oplus
\mathcal{D}_{\lambda+\frac{1}{2},\mu+\frac{1}{2}}$ then the
component $\gamma_{00}$ of $\gamma$ is broken on
$(\gamma_{000},\gamma_{001})$ where

\begin{align*}\left\{\begin{array}{lllllll}
       \gamma_{000}:\Vect_p(\mathbb{R})&\rightarrow&\mathcal{D}_{\l,\mu} \hbox{ and }\\
      \gamma_{001}:\Vect_p(\mathbb{R})&\rightarrow
      &\mathcal{D}_{\lambda+\frac{1}{2},\mu+\frac{1}{2}} \\
     \end{array}\right.
\end{align*}the component $\gamma_{000}$ is a differential operator
with degree $\geq 2$ then it
 vanish on $\mathfrak{sl}(2)$ thus $\gamma_{0}$ is a
 supertransvectant by the following lemma:
\begin{Lemma}\label{sl2} (\cite{BenBouj07} Lemma 3.3.) Up to coboundary, any 1-cocycle $\gamma\in
Z^1(\cK(1),\frak{D}_{\lambda,\mu})$ vanishing on $\frak{sl}(2)$ is
$\mathfrak{osp}(1|2)$-invariant. That is, if
$\gamma(X_1)=\gamma(X_x)=\gamma(X_{x^2})=0$ then the restriction of
$\gamma$ to $\frak{osp}(1|2)$ is trivial.
\end{Lemma}
As the adjoint $\cK(1)$-module is isomorphic to $\frak{F}_{-1}$, the
1-cocycle
$$\gamma:\cK(1)\longrightarrow \frak{D}_{\l,\mu}$$ can be
looked as a differential operator:$$\gamma:\frak{F}_{-1}\otimes
\frak{F}_{\lambda}\longrightarrow\frak{F}_{\mu}.$$ We consider the
supertransvectants $\frak{J}^{-1,\lambda}_k$ as it is
$k=\mu-\lambda$. If $\mu-\lambda \geq 2$, we look for those which
are non trivial 1-cocycles. In this way we can deduce
$\gamma_{\lambda,\lambda+2}$, $\gamma_{0,3}$,
$\gamma_{-\frac{1}{2},\frac{5}{2}}$ and $\gamma_{a,a+4}$ where
$a=\frac{-7\pm\sqrt{33}}{4}$.
\end{proof}
\section{Deformation Theory and Cohomology}
Deformation theory of Lie algebra homomorphisms was first considered
for one-parameter deformations \cite{ NijenRichar67, Richardson69}.
Recently, deformations of Lie (super)algebras with multi-parameters
were intensively studied ( see,  e.g., \cite{Fialowski99,AgrALO02,
AgrBBO03,Fialowski86,Fialowski88, RogOvs98, RogOvs99}). Here we give
an outline of this theory.
\subsection{Infinitesimal deformations }
Let $\rho_0 :\frak g \rightarrow{\rm End}(V)$ be an action of a Lie
superalgebra $\frak g$ on a vector superspace $V$. When studying
deformations of the $\frak g$-action $\rho_0$, one usually starts
with {\it infinitesimal} deformations:
\begin{equation}\label{infdef}
\rho=\rho_0+t\,\gamma,
\end{equation}
where $\gamma:\frak g\to{\rm End}(V)$ is a linear map and $t$ is a
formal parameter. The homomorphism condition
\begin{equation}\label{homocond}
[\rho(x),\rho(y)]=\rho([x,y]),
\end{equation}
where $x,y\in\frak g$, is satisfied in order 1 in $t$ if and only if
$\gamma$ is a 1-cocycle. That is, the map $\gamma$ satisfies
\begin{equation*}
 \gamma[x,~y] - (-1)^{p(x)p(\gamma)}[\rho_0(x),~ \gamma(y) ]
+(-1)^{p(y)(p(x)+p(\gamma))} [\rho_0(y), ~\gamma(x)]=0.
\end{equation*}
If $\dim{\mathrm{H}^1(\frak g;{\rm End}(V))}=m$, then one can choose
1-cocycles $\gamma_1,\ldots,\gamma_m$ as a basis of
$\mathrm{H}^1(\frak g;{\rm End(V)})$ and consider the following
infinitesimal deformation
\begin{equation}
\label{InDefGen2} \rho=\rho_0+\sum_{i=1}^m{}t_i\,\gamma_i,
\end{equation}
where $t_1,\ldots,t_m$ are independent formal parameters with $t_i $
and $\gamma_i$ are the same parity i.e. $p(t_i)=p(\gamma_i)$.

For the study \  of deformations of the $\mathcal{K}(1)$-action on
$\widetilde{\mathcal{S}}^n_{\delta}$, \  we must consider the space
${\mathrm H}^1_{\rm diff}(\mathcal{K}(1),\End(\widetilde{
\mathcal{S}}^n_{\delta})).$ \ Any  infinitesimal deformation of the
$\mathcal{K}(1)$-module $\widetilde{ \mathcal{S}}^n_{\delta}$ is
then of the form
\begin{equation}\label{infdef1}
\widetilde{\frak L}_{v_F}=\frak{L}_{v_F}+\frak{L}^{(1)}_{v_F},
\end{equation}
where $\frak{L}_{v_F}$ is the Lie derivative of $\widetilde{
\mathcal{S}}^n_{\delta}$ along the vector field $v_F$ defined by
(\ref{superaction}), and
\begin{equation}\label{infpart}
\begin{array}{c}
 {\frak L}_{v_F}^{(1)}= \displaystyle\sum_{\lambda}
 \displaystyle\sum_{k=0,3,4,5}t_{\lambda, \lambda+\frac{k}{2}}\,
\gamma_{\lambda,\lambda+\frac{k}{2}}(v_F) \\
+t_{0, 3}\, \gamma_{0,3}(v_F)+t_{\frac{-5}{2}, \frac{1}{2}}\,
\gamma_{\frac{-5}{2},
\frac{1}{2}}(v_F)+\displaystyle\sum_{i=1,2}t_{a_i, a_i+4}\,
\gamma_{a_i, a_i+4}(v_F)\\
+\widetilde{t}_{0, \frac{1}{2}}\, \widetilde{\gamma}_{0,
\frac{1}{2}}(v_F)+t_{0, \frac{1}{2}}\, \gamma_{0,
\frac{1}{2}}(v_F)+t_{- \frac{1}{2},0}\, \gamma_{-
\frac{1}{2},0}(v_F),\\
\end{array}
\end{equation}where $a_1=\frac{-7-\sqrt{33}}{4}$ and
$a_2=\frac{-7+\sqrt{33}}{4}$.

Let denote that we restrict our study to the deformation
(\ref{infdef1})  for generic values of $\lambda$.

\subsection{Integrability conditions }

Consider the supercommutative associative superalgebra
$\mathbb{C}[[t_1,\ldots,t_m]]$ with unity and consider the problem
of integrability of infinitesimal deformations. Starting with the
infinitesimal deformation (\ref{InDefGen2}), we look for a formal
series
\begin{equation}
\label{BigDef2} \rho= \rho_0+\sum_{i=1}^m{}t_i\,\gamma_i+
\sum_{i,j}{}t_it_j\,\rho^{(2)}_{ij}+\cdots,
\end{equation}
where the highest-order terms
$\rho^{(2)}_{ij},\rho^{(3)}_{ijk},\ldots$ are linear maps from
$\frak g$ to ${\rm End(V)}$ with $p(\rho^{(2)}_{ij})=p(t_it_j), \
p(\rho^{(3)}_{ijk})=p(t_it_jt_k),\dots$ such that the map
\begin{equation} \label{map} \rho:\frak g\to{\rm
End(V)}\otimes\mathbb{C}[[t_1,\ldots,t_m]],
\end{equation}
satisfies the homomorphism condition (\ref{homocond}) at any order
in $t_1,\ldots,t_m$.

However, quite often the above problem has no solution. Following
\cite{Fialowski88} and \cite{AgrALO02}, we must impose extra
algebraic relations on the parameters $t_1,\ldots,t_m$ in order to
get the full deformation. Let ${\mathcal{ R}}$ be an ideal in
$\mathbb{C}[[t_1,\ldots,t_m]]$ generated by some set of relations,
the quotient
\begin{equation}
\label{TrivAlg2} {\mathcal{A}}=\mathbb{C}[[t_1,\ldots,t_m]]/{
\mathcal{R}}
\end{equation}
is a supercommutative associative superalgebra with unity, and one
can speak about  deformations with base ${ \mathcal{A}}$, (see
\cite{Fialowski99} for details). The map (\ref{map}) sends $\frak g$
to ${\rm End}(V)\otimes{ \mathcal{A}}$.
\subsection{Equivalence and the first cohomology}

The notion of equivalence of deformations over commutative
associative algebras has been considered in \cite{Fialowski88}.

\begin{Definition}  Two deformations, $\rho$ and $\rho'$ with the
same base $\mathcal{A}$ are called equivalent if there exists a
formal inner automorphism $\Psi$ of the associative superalgebra
${\rm End}(V)\otimes{\mathcal{A}}$ such that
\begin{equation*}
\Psi\circ\rho=\rho'\hbox{ and } \Psi(\mathbb{I})=\mathbb{I},
\end{equation*}
where $\mathbb{I}$ is the unity of the superalgebra ${\rm
End}(V)\otimes{ \mathcal{A}}.$
\end{Definition}

 As a consequence, two infinitesimal deformations $
\rho_1=\rho_0+t\,\gamma_1, $ and $ \rho_2=\rho_0+t\,\gamma_2, $ are
equivalent if and only if $\gamma_1-\gamma_2$ is a coboundary:
\begin{equation*}(\gamma_1-\gamma_2)(x)=(-1)^{p(x)p(A_1)}[\rho_0(x),A_1]=\delta
A_1(x),
\end{equation*}
where $A_1\in{\rm End}(V)$ and $\delta$ stands for the cohomological
Chevalley-Eilenberg  coboundary for cochains on $\frak g$ with
values in $\End(V)$ (see \cite{Fuchs86, NijenRichar67}).

So, the first cohomology space $\mathrm{H}^1(\frak g;{\rm End}(V))$
determines and classifies infinitesimal deformations up to
equivalence.

\section{ Computing the second-order Maurer-Cartan equation}
Any  infinitesimal deformation of the $\mathcal{K}(1)$-module
$\widetilde{ \mathcal{S}}^n_{\delta}$ can be integrated to a formal
deformation, such deformation is then of the form
\begin{equation}\label{infdef2}
\widetilde{\frak
L}_{v_F}=\frak{L}_{v_F}+\frak{L}^{(1)}_{v_F}+\frak{L}^{(2)}_{v_F}+\cdots,
\end{equation}where \
$\frak{L}^{(2)}_{v_F}=\sum_{i,j}{}t_it_j\,\rho^{(2)}_{ij}$, \
$\frak{L}^{(3)}_{v_F}=\sum_{i,j,k}{}t_it_jt_k\,\rho^{(3)}_{ijk}$,\
\ldots.

Setting
\begin{equation*}\varphi_t = {\rho}- \rho_0,\,\,
\mathcal{L}^{(1)}=\sum_{i=1}^m{}t_i\,\gamma_i,\,\,
\mathcal{L}^{(2)}=\sum_{i,j}{}t_it_j\,\rho^{(2)}_{ij},\,\dots,
\end{equation*}
we can rewrite the relation (\ref{homocond}) in the following way:

\begin{equation}
\label{developping} [\varphi_t(G) , \rho_0(H) ] + [\rho_0(G) ,
\varphi_t(H) ] - \varphi_t([G , H]) +\sum_{i,j > 0}
\;[\mathcal{L}^{(i)}(G) ,\mathcal{L}^{(j)}(H)] = 0.
\end{equation}

The first three terms give $(\delta\varphi_t) (G,H)$. The relation
(\ref{developping}) becomes now equivalent to:
\begin{equation}
\label{maurrer cartan} \delta\varphi_t(G,H) + \sum_{i,j >
0}[\mathcal{L}^{(i)}(G) , \mathcal{L}^{(j)}(H)]= 0.
\end{equation}\vskip 0.5cm
\begin{definition}Let $\gamma_1 , \gamma_2 : \frak
g\rightarrow \End(V)$ be two arbitrary linear maps, we denote $[\![
\ , \ ]\!]$ the {\it cup-product} defined by:\begin{equation}
\label{maurrer cartan1}
\renewcommand{\arraystretch}{1.4}
\begin{array}{l}
{}[\![\gamma_1 , \gamma_2]\!] : \frak g \otimes \frak g \rightarrow
\End(V)\\ {}[\![\gamma_1 , \gamma_2]\!] (G , H) =
(-1)^{|G||\gamma_2|}\gamma_1(G)\circ \gamma_2(H) -
(-1)^{|H|(|G|+|\gamma_2|)}\gamma_1(H)\circ
\gamma_2(G)\\{}\hbox{where } |\ \ \ | \ \ \hbox{denotes the
parity}.\end{array}
\end{equation}
\end{definition}
Expanding (\ref{maurrer cartan}) in power series in $t_1,\cdots,t_m
$, we obtain the following equation for $\mathcal{L}^{(s)}$:
\begin{equation}
\label{maurrer cartank} \delta\mathcal{L}^{(s)}(G,H)+ \sum_{i+j=s}
[\mathcal{L}^{(i)}(H) ,  \mathcal{L}^{(j)}(G)] = 0.
\end{equation}
The first non-trivial relation is
\begin{equation}\label{maurrer cartan2}\delta{\mathcal{L}^{(2)}}
=-\frac{1}{2} [\![\displaystyle\sum_{\lambda}\displaystyle\sum_{j\in
\{0,3,4,5\}}
t_{\lambda,\lambda+\frac{j}{2}}\gamma_{\lambda,\lambda+\frac{j}{2}},
\displaystyle\sum_{\lambda}\displaystyle\sum_{j\in \{0,3,4,5\}}
t_{\lambda,\lambda+\frac{j}{2}}\gamma_{\lambda,\lambda+\frac{j}{2}}]\!]
\end{equation}
Therefore, it is easy to check that for any two $1$-cocycles
$\gamma_1$ and $\gamma_2 \in Z^1 (\frak g , \End(V))$, the bilinear
map $[\![\gamma_1 , \gamma_2]\!]$ is a $2$-cocycle. The first
non-trivial relation (\ref{maurrer cartan2}) is precisely the
condition for this $2$-cocycle to be a coboundary. Moreover, if one
of the $1$-cocycles $\gamma_1$ or $\gamma_2$ is a coboundary, then
$[\![\gamma_1 , \gamma_2]\!]$ is a $2$-coboundary. We therefore,
naturally deduce that the operation (\ref{maurrer cartan1}) defines
a bilinear map:
\begin{equation}
\label{cup-product} \mathrm{H}^1 (\frak g ;\End( V))\otimes
\mathrm{H}^1 (\frak g ; \End( V))\rightarrow \mathrm{H}^2 (\frak g ;
\End( V)).
\end{equation}
All the potential obstructions   are in the image of $\mathrm{H}^1
(\frak g ; \End(V))$   under the cup-product in $\mathrm{H}^2 (\frak
g ; \End(V))$.

The bilinear map (\ref{cup-product}) can be  decomposed in
homogeneous components as follows
\begin{equation}
\label{supcup} \mathrm{H}^1 (\frak g ;\End( V))_i\otimes
\mathrm{H}^1 (\frak g ; \End( V))_j\rightarrow \mathrm{H}^2 (\frak g
; \End( V))_{i+j}
\end{equation}\vskip 0.5cm
where $i,\,j\in\mathbb{Z}/2\mathbb{Z}$.

 \subsection{Cup-products of the non-trivial 1-cocycles}

  Let us consider
the 2-cocycles
\begin{equation}
\label{kaouthar}
\renewcommand{\arraystretch}{1.4}
\begin{array}{l}
{}B_{\lambda,\lambda+k}(G,H)=\displaystyle\sum_{j\in
\{0,\frac{1}{2},1,\ldots,k\}}
t_{\lambda+j,\lambda+k}t_{\lambda,\lambda+j} \ \
[\![\gamma_{\lambda+j,\lambda+k},
\gamma_{\lambda,\lambda+j}]\!](G,H),
\end{array}
\end{equation}then, it's easy to see that:
\begin{equation}\label{Bl,l+k}
{}B_{\lambda,\lambda+k} \in \mathrm{Z}^2(\mathcal{K}(1)
,\frak{D}_{\lambda,\mu}).
\end{equation}
we compute successively the 2-cocycles $B_{\lambda,\lambda+k}(G,H)$
\ for $G=g_0+\theta g_1$ and $H=h_0+\theta h_1$ two contact vectors
and $F= f_0+\theta f_1 \in {\mathfrak{F}}_\lambda$. For generic
values of $\lambda$ we have:\vskip 0.5cm $\checkmark$ For $k=0$, let
\begin{equation*}\label{B_{lambda,lambda}}
    B_{\lambda,\lambda}(G,H)=t_{\lambda,\lambda}^2[\![\gamma_{\lambda,\lambda},
\gamma_{\lambda,\lambda}]\!] :
\mathcal{K}(1)\times\mathcal{K}(1)\longrightarrow
\frak{D}_{\lambda,\lambda}
\end{equation*}

and\begin{equation*}\label{B'_{lambda,lambda}}
B_{\lambda,\lambda}(G,H)=0
\end{equation*}

$\checkmark$ For $k= \frac{3}{2}$, let
\begin{equation*}\label{B_{lambda,lambda+3/2}}
\renewcommand{\arraystretch}{1.4}
\begin{array}{l}
  {}B_{\lambda,\lambda+\frac{3}{2}}(G,H)=
(t_{\lambda,\lambda+\frac{3}{2}}
t_{\lambda,\lambda}[\![\gamma_{\lambda,\lambda+\frac{3}{2}},\gamma_{\lambda,\lambda}]\!]
+ t_{\lambda+\frac{3}{2},\lambda+\frac{3}{2}}
t_{\lambda,\lambda+\frac{3}{2}}[\![\gamma_{\lambda+\frac{3}{2}
,\lambda+\frac{3}{2}},\gamma_{\lambda,\lambda+\frac{3}{2}}]\!])(G,H):\\{}\hskip
4cm\mathcal{K}(1)\times\mathcal{K}(1)\longrightarrow
\frak{D}_{\lambda,\lambda+\frac{3}{2}},
\end{array}
\end{equation*}
\begin{equation*}\label{B'_{lambda,lambda+3/2}}
\renewcommand{\arraystretch}{1.4}
\begin{array}{l}{}B_{\lambda,\lambda+\frac{3}{2}}(G,H)(F)
=(t_{\lambda,\lambda+\frac{3}{2}}
t_{\lambda,\lambda}-t_{\lambda+\frac{3}{2},\lambda+\frac{3}{2}}
t_{\lambda,\lambda+\frac{3}{2}})\big((h_0^3g_0'-h_0'g_0^3)f_0+(g'_0h_1''-g''_1h'_0)(f_0+\theta
f_1)\\{}\hskip 6cm+\theta(g'_1h'_1)f_0\big)\end{array}
\end{equation*}

$\checkmark$ For $k=2$, let
 \begin{equation*}\label{B_{lambda,lambda+2}}
\renewcommand{\arraystretch}{1.4}
\begin{array}{l}{}B_{\lambda,\lambda+2}(G,H)=\left(t_{\lambda,\lambda+2}
t_{\lambda,\lambda}[\![
\gamma_{\lambda,\lambda+2},\gamma_{\lambda,\lambda}]\!]+t_{\lambda+2,\lambda+2}
t_{\lambda,\lambda+2}[\![
\gamma_{\lambda+2,\lambda+2},\gamma_{\lambda,\lambda+2}]\!]\right)(G,H):\\{}\hskip4cm
\mathcal{K}(1)\times\mathcal{K}(1)\longrightarrow
\frak{D}_{\lambda,\lambda+2},\end{array}
\end{equation*}

$  B_{\lambda,\lambda+2}(G,H)(F)= (t_{\lambda,\lambda+2}
t_{\lambda,\lambda}-t_{\lambda+2,\lambda+2} t_{\lambda,\lambda+2})
\big(2\lambda(h_0^3g'_0-g_0^3h'_0)f_0+2\lambda(g'_0h_1''
-g_1''h'_0)f_1+$

\hskip3.5cm $ \theta((2\lambda+7)
  (g_0^3h'_0-h_0^3g'_0)f_1+2\lambda(h_1^3g'_0-g_1^3h'_0)f_0\big)+\theta t_{\lambda,\lambda+2}
t_{\lambda,\lambda}\big(-(2\lambda+3)$

 \hskip3.4cm $(g_0^3h'_1-h_0^3g'_1)f_0
  -3(h_1''g_0''-h_0''g_1'')f_0+3
(g_1''h'_1+g'_1h_1'')f_0\big)+\theta t_{\lambda+2,\lambda+2}
t_{\lambda,\lambda+2} $

\hskip3.5cm
$\big(-2\lambda(g'_0h_1^3-h'_0g_1^3)f_0+3(g'_0h_1''-h'_0g_1'')f'_0-
3(g'_1h''_1+h'_1g_1'')f_1\big)$

 $\checkmark$ For $k=
\frac{5}{2}$, let
 \begin{equation*}\label{B_{lambda,lambda+5/2}}
\renewcommand{\arraystretch}{1.4}
\begin{array}{l}{}B_{\lambda,\lambda+\frac{5}{2}}=\left(t_{\lambda,\lambda+\frac{5}{2}}
t_{\lambda,\lambda}[\![
\gamma_{\lambda,\lambda+\frac{5}{2}},\gamma_{\lambda,\lambda}]\!]+t_{\lambda+\frac{5}{2},
\lambda+\frac{5}{2}}t_{\lambda,\lambda+\frac{5}{2}}[\![
\gamma_{\lambda+\frac{5}{2},\lambda+\frac{5}{2}},
\gamma_{\lambda,\lambda+\frac{5}{2}}]\!]\right)(G,H):\\
\hskip4cm\mathcal{K}(1)\times\mathcal{K}(1) \longrightarrow
\frak{D}_{\lambda,\lambda+\frac{5}{2}},\end{array}
\end{equation*}
$B_{\lambda,\lambda+\frac{5}{2}}(G,H)(F)=(t_{\lambda+\frac{5}{2},
\lambda+\frac{5}{2}}t_{\lambda,\lambda+\frac{5}{2}}-t_{\lambda+\frac{5}{2},
\lambda+\frac{5}{2}}t_{\lambda,\lambda+\frac{5}{2}})\big((g_0^3h'_0
-h_0^3g'_0)f_1+  3 (g_1''h_0'-g_0'h_1'')f'_0$

\hskip3.1cm $ - \theta
  \big(-4 (g_0^3h'_0-h_0^3g'_0)f_0'+ 2\lambda(g_0^4h'_0 -h_0^4g'_0)f_0 - (2\lambda+1)(g_1^3h'_0-h_1^3g'_0)f_1$

\hskip3.1cm
$+3(g_1''h'_1+g_1'h''_1)f'_0+3(g_1''h_0''-h_1''g_0'')f_1\big)\big)+t_{\lambda+\frac{5}{2},
\lambda+\frac{5}{2}}t_{\lambda,\lambda+\frac{5}{2}}\big(3(g_1''h_0''-h_1''g_1'')f_0
  $

\hskip3.1cm$+\theta\big(-4(g_0^3h_0''-g_0''h_0^3)f_0+6
g_1''h_1''f_0-(1+2\lambda)(g_1^3h'_1+h_1^3g'_1)f_0+3(g_1''h'_0-h_1''g'_0)f'_1$

\hskip3.1cm$-
4(g'_1h_0^3-h'_1g_0^3)f_0+\theta\big((g'_1h_1^3+g_1^3h'_1)f_0 -
 4(g'_1h_0^3-h'_1g_0^3)f_1\big)\big).
$

$\checkmark$ For $k=3$, let
\begin{equation*}\label{B_{lambda,lambda+3}}\renewcommand{\arraystretch}{1.4}
\begin{array}{l}{}B_{\lambda,\lambda+3}(G,H)=t_{\lambda +\frac{3}{2},\lambda+3}
t_{\lambda,\lambda+\frac{3}{2}}[\![ \gamma_{\lambda
+\frac{3}{2},\lambda+3},\gamma_{\lambda,\lambda+\frac{3}{2}}]\!](G,H):
\mathcal{K}(1)\times\mathcal{K}(1)\longrightarrow\frak{D}_{\lambda,\lambda+3},\end{array}\end{equation*}

$B_{\lambda,\lambda+3}(G,H)(F)=-2t_{\lambda +\frac{3}{2},\lambda+3}
t_{\lambda,\lambda+\frac{3}{2}}\big(g_1''h_1''f_0+\theta(
g_1''h_1''f_1-g_0^3h_1''f_0+h_0^3g_1''f_0)\big)$\vskip0.3cm

$\checkmark$ For $k=\frac{7}{2}$, let
\begin{equation*}\label{B_{lambda,lambda+7/2}}
\begin{array}{cc}
  B_{\lambda,\lambda+\frac{7}{2}}(G,H)= & (t_{\lambda
+\frac{3}{2},\lambda+\frac{7}{2}}
t_{\lambda,\lambda+\frac{3}{2}}[\![ \gamma_{\lambda
+\frac{3}{2},\lambda+\frac{7}{2}},\gamma_{\lambda,\lambda+\frac{3}{2}}]\!] \\
  {}& + t_{\lambda +2,\lambda+\frac{7}{2}} t_{\lambda,\lambda+2}[\![
\gamma_{\lambda
+2,\lambda+\frac{7}{2}},\gamma_{\lambda,\lambda+2}]\!])(G,H): \\
{}&\mathcal{K}(1)\times\mathcal{K}(1)\longrightarrow
\frak{D}_{\lambda,\lambda+4}, \\
\end{array}
\end{equation*}

$B_{\lambda,\lambda+\frac{7}{2}}(G,H)(F)=(t_{\lambda
+\frac{3}{2},\lambda+\frac{7}{2}}
t_{\lambda,\lambda+\frac{3}{2}}-t_{\lambda +2,\lambda+\frac{7}{2}}
t_{\lambda,\lambda+2})\big(6g_1''h_1''f_1-
2\lambda(g_0^3h_1''-h_0^3g_1'')f_0 + $

\hskip3.1cm$\theta\left(-6g_1''h_1''f'_0-
2(\lambda+3)(g_1''h_1^3+g_1^3h_1'')f_0-
2(\lambda+3)(g_0^3h_1''-h_0^3g''_1)f_1\right)\big).$

 $\checkmark$ For $k=4$, let
\begin{equation*}\label{B_{lambda,lambda+4}}\renewcommand{\arraystretch}{1.4}
\begin{array}{c}
  {}{}B_{\lambda,\lambda+4}(G,H)=(t_{\lambda +\frac{3}{2},\lambda+4}
t_{\lambda,\lambda+\frac{3}{2}}[\![ \gamma_{\lambda
+\frac{3}{2},\lambda+4},\gamma_{\lambda,\lambda+\frac{3}{2}}]\!]
+t_{\lambda +\frac{5}{2},\lambda+4}
t_{\lambda,\lambda+\frac{5}{2}}[\![ \gamma_{\lambda
+\frac{5}{2},\lambda+4},\gamma_{\lambda,\lambda+\frac{5 }{2}}]\!]\\
+t_{\lambda+2 ,\lambda+4} t_{\lambda,\lambda+2}[\![ \gamma_{\lambda
+2,\lambda+4},\gamma_{\lambda,\lambda+2
}]\!])(G,H):\mathcal{K}(1)\times\mathcal{K}(1)\longrightarrow
\frak{D}_{\lambda,\lambda+4},
\end{array}
\end{equation*}
$B_{\lambda,\lambda+4}(G,H)(F)=(t_{\lambda +\frac{3}{2},\lambda+4}
t_{\lambda,\lambda+\frac{3}{2}} +t_{\lambda +\frac{5}{2},\lambda+4}
t_{\lambda,\lambda+\frac{5}{2}}+\frac{1}{3}\,t_{\lambda+2
,\lambda+4} t_{\lambda,\lambda+2})\big
(-2\lambda(g_1''h_1^3+g_1^3h_1'')f_0$

\hskip3.4cm$  + \ 6g_1''h_1''f'_0+4(g_0^3h_1''-h_0^3g_1'')f_1+
\theta\left(-(2\lambda+1)( g_1''h_1^3+g_1^3h_1'')f_1+
6g_1''h_1''f'_1\right.$

\hskip3.4cm$\left.+2\lambda(g_0^4h_1''-h_0^4g_1'')f_0-7(g_0^3h_1''-h_0^3g_1'')f'_0
+ 2\lambda (g_0^3h_1^3-h_0^3g_1^3)f_0 \right)\big),$

 $\checkmark$ For $k=\frac{9}{2}$, let
\begin{equation*}\label{B_{lambda,lambda+9/2}}\renewcommand{\arraystretch}{1.4}
\begin{array}{cc}
  B_{\lambda,\lambda+\frac{9}{2}}(G,H)= & t_{\lambda+2,\lambda+\frac{9}{2}}
t_{\lambda,\lambda+2}[\![ \gamma_{\lambda+2,\lambda+\frac{9}{2}},
\gamma_{\lambda,\lambda+2}]\!] (G,H)\\
  {} &+t_{\lambda+\frac{5}{2},\lambda+\frac{9}{2}}
t_{\lambda,\lambda+\frac{5}{2}}[\![
\gamma_{\lambda+\frac{5}{2},\lambda+\frac{9}{2}},
\gamma_{\lambda,\lambda+\frac{5}{2}}]\!](G,H): \\
  {} & \mathcal{K}(1)\times\mathcal{K}(1)
\longrightarrow\frak{D}_{\lambda,\lambda+\frac{9}{2}}, \\
\end{array}
\end{equation*}
$B_{\lambda,\lambda+\frac{9}{2}}(G,H)(F)=(t_{\lambda+2,\lambda+\frac{9}{2}}
t_{\lambda,\lambda+2}-t_{\lambda+\frac{7}{2},\lambda+\frac{9}{2}}
t_{\lambda,\lambda+\frac{5}{2}})
\left(6(\lambda+2)(g_0^3h_1''-g_1''h_0^3)f'_0-4\lambda(\lambda+4)
\right.$

\hskip3.4cm$\left.(g_0^3h_1^3-g_1^3h_0^3)f_0+2\lambda(g_0^4h_1''-h_0^4g_1'')f_0
+3(2\lambda+1)(g_1^3h_1''+g_1''h_1^3)f_1 \right.$

\hskip3.4cm$\left.-18g_1''h_1''f'_1
+\theta\big(-4\lambda(\lambda+4)(h_0^4g_0^3-g_0^4h_0^3)f_0-
3(g_0^4h_1''-g_1''h_0^4)f_1+\right.$

\hskip3.4cm$\left.(12-(2\lambda+5)(2\lambda+3))(g_1^3h_0^3-h_1^3g_0^3)f_1
+3\lambda(2\lambda+7)(g_0^3h_1''-g_1''h_0^3)f'_1\right.$

\hskip3.4cm$\left.+4\lambda(2\lambda+5)g_1^3h_1^3f_0-
6(\lambda+2)(g_1^3h_1''+g_1''h_1^3)f_0-
6\lambda(h_1^4g_1''-g_1^4h_1'')f_0\right.$

\hskip3.4cm$\left.+18 g_1''h_1''f_0''\right.$

  $\checkmark$ For $k=5$, let
\begin{equation*}\label{B_{lambda,lambda+5}}\renewcommand{\arraystretch}{1.4}
\begin{array}{l}{}{}B_{\lambda,\lambda+5}(G,H)=
t_{\lambda+\frac{5}{2},\lambda+5}t_{\lambda,\lambda+\frac{5}{2}}[\![
\gamma_{\lambda+\frac{5}{2},\lambda+5},
\gamma_{\lambda,\lambda+\frac{5}{2}}]\!](G,H):
\mathcal{K}(1)\times\mathcal{K}(1)
\longrightarrow\frak{D}_{\lambda,\lambda+5},\end{array}
\end{equation*}
$B_{\lambda,\lambda+5}(G,H)(F)=t_{\lambda+\frac{5}{2},\lambda+5}
t_{\lambda,\lambda+\frac{5}{2}}
\left(-4\lambda(h_0^3g_0^4-h_0^4g_0^3)f_0-8\lambda
(g_0^3h_1^3-g_1^3h_0^3)f_1-12(h_1''g_0^4-\right.$

\hskip3.4cm$\left.g_1''h_0^4)f_1+ 6\lambda
(g_1''h_1^4+h_1''g_1^4)f_0+6(2\lambda+1)(g_1''h_1^3+h_1''g_1^3)f'_0-18g_1''h_1''f_0''\right.$

\hskip3.4cm$\left.
  + 4\lambda(2\lambda+5)g_1^3h_1^3f_0+
  \theta\left(-10(h_0^4g_0^3-h_0^3g_0^4) f_1+8\lambda
(g_0^3h_1^4-g_1^4h_0^3)f_0+\right.\right.$

\hskip3.4cm$\left.4(2\lambda+1)(g_0^3h_1^3-g_1^3h_0^3)f'_0+28
(h_1''g_0^4-g_1''h_0^4)f'_0+ 2\lambda
(g_0^4h_1^3-h_0^4g_1^3)f_0+\right.$

\hskip3.4cm$\left. 3(2\lambda+1)(g_1''h_1^4+h_1''g_1^4)f_1+
12(1+\lambda)(g_1''h_1^3+h_1''g_1^3)f'_1-18 g_1''h_1''f_1''\right.$

\hskip3.4cm$\left.+2(2\lambda+6) (2\lambda+1)g_1^3h_1^3f_1\right))$
\begin{Proposition}\label{2cocycles}

 a) Each of the 2-cocycles:
\begin{equation*}
    B_{\lambda,\lambda+\frac{3}{2}} \hbox{ for } \lambda\neq
    -\frac{1}{2};\,B_{\lambda,\lambda+2};
    \,B_{\lambda,\lambda+\frac{5}{2}}\hbox{ for } \lambda\neq -1;\,B_{\lambda,\lambda+3}
    \hbox{ and } \,B_{\lambda,\lambda+5}
\end{equation*}
   define non trivial cohomology class. Moreover, these classes are
   linearly independant.

b) Each of the 2-cocycles $B_{\lambda,\lambda+\frac{7}{2}}$,
$B_{\lambda,\lambda+4}$ and $B_{\lambda,\lambda+\frac{9}{2}}$ is a
coboundary.
\end{Proposition}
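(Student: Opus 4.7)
The plan is to analyse each value of $k$ separately, exploiting the fact that the cocycles $B_{\lambda,\lambda+k}$ take values in the mutually disjoint summands $\mathfrak{D}_{\lambda,\lambda+k}$ of $\mathrm{End}(\widetilde{\mathcal{S}}^n_\delta)$. Linear independence of the cohomology classes in part (a) therefore reduces to proving non-triviality of each individual class, and I may treat each $k$ in isolation. In every case the question is whether $B_{\lambda,\lambda+k}=\delta A$ for some 1-cochain $A:\mathcal{K}(1)\to\mathfrak{D}_{\lambda,\lambda+k}$, to be either ruled out (for (a)) or exhibited explicitly (for (b)).

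For part (a), my approach rests on reduction to $\mathfrak{osp}(1|2)$-invariance. For $k\geq 2$ all the generators $\gamma_{\lambda,\lambda+k}$ vanish on $\mathfrak{sl}(2)$, so the restriction of $B_{\lambda,\lambda+k}$ to $\mathfrak{sl}(2)\otimes\mathfrak{sl}(2)$ is identically zero, forcing any primitive $A$ to be an $\mathfrak{sl}(2)$-cocycle. By Lemma \ref{sa} and Lemma \ref{sl2} I may normalize $A$, up to coboundary, to be $\mathfrak{osp}(1|2)$-invariant, which means $A$ is proportional to the supertransvectant $\mathfrak{J}^{-1,\lambda}_{k}:\mathfrak{F}_{-1}\otimes\mathfrak{F}_\lambda\to\mathfrak{F}_{\lambda+k}$. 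This yields a one-parameter (or in a few cases, finite-parameter) ansatz. Evaluating the equation $\delta A(v_F,v_H)=B_{\lambda,\lambda+k}(v_F,v_H)$ on generators of $\mathcal{K}(1)$ outside $\mathfrak{osp}(1|2)$, namely $v_{x^n}$ and $v_{x^n\theta}$ for small $n$, produces a linear system on the coefficients of the ansatz. One then shows that this system is inconsistent for generic $\lambda$, forcing non-triviality. For $k=3/2$ the odd 1-cocycle $\gamma_{\lambda,\lambda+3/2}$ does not vanish on $\mathfrak{sl}(2)$, so this case requires a preliminary normalization: subtract an $\mathfrak{sl}(2)$-coboundary from $A$ and then proceed as above. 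The exceptional values of $\lambda$ excluded in the statement ($\lambda=-1/2$ for $k=3/2$ and $\lambda=-1$ for $k=5/2$) are precisely those where the ansatz for $\gamma_{\lambda,\lambda+k}$ itself degenerates, so they must be ruled out at this step.

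For part (b) the strategy is constructive. I posit a 1-cochain $A$ of supertransvectant type with coefficients that are rational functions of $\lambda$ to be determined, compute $\delta A$ using the explicit formula \eqref{d-action} for the $\mathcal{K}(1)$-action on $\mathfrak{D}_{\lambda,\mu}$, and match against the explicit component expressions already given in the text for $B_{\lambda,\lambda+7/2}$, $B_{\lambda,\lambda+4}$, $B_{\lambda,\lambda+9/2}$. Matching the coefficients of each distinct monomial in $g_0,g_1,h_0,h_1,f_0,f_1$ and their $\overline{\eta}$-derivatives produces a finite linear system in the undetermined coefficients of $A$; the existence of a consistent solution exhibits the coboundary. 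The main obstacle is the combinatorial bulk of these matchings, particularly for $B_{\lambda,\lambda+9/2}$, which combines two cup-products and produces a long list of monomial types in the parity decomposition. A secondary difficulty, in both parts, is to verify carefully that no $\mathfrak{osp}(1|2)$-invariant solutions are overlooked: one must confirm that in each relevant bidegree the supertransvectant of the appropriate order exhausts the space of invariants, so that the ansatz is genuinely exhaustive.
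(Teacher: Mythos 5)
The central step of your part (a) fails for $k\in\{\tfrac{3}{2},2,\tfrac{5}{2}\}$. Each of $B_{\lambda,\lambda+3/2}$, $B_{\lambda,\lambda+2}$, $B_{\lambda,\lambda+5/2}$ contains cup-products with a factor $\gamma_{\mu,\mu}(v_G)(F)=G'F$, and this cocycle does not vanish on $\mathfrak{sl}(2)$ (already $\gamma_{\mu,\mu}(v_x)=\mathrm{Id}$). Hence $B_{\lambda,\lambda+k}(X,H)\neq 0$ for $X\in\mathfrak{sl}(2)$ and general $H\in\cK(1)$, even though the restriction to $\mathfrak{sl}(2)\otimes\mathfrak{sl}(2)$ vanishes. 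Your normalization therefore only achieves $A|_{\mathfrak{sl}(2)}=0$; the further conclusion that $A$ is $\mathfrak{osp}(1|2)$-equivariant would require $\delta A(X,H)=B(X,H)=0$ for $X\in\mathfrak{osp}(1|2)$ and \emph{arbitrary} $H$, which is exactly what fails here. (Note also that Lemma \ref{sl2} is stated for $1$-cocycles, whereas your $A$ satisfies $\delta A=B\neq 0$, so it cannot be invoked as is.) Consequently a hypothetical primitive need not be a supertransvectant, your ansatz is not exhaustive, and non-triviality is not established for these three families. The paper instead tests the coboundary equation (\ref{cobord}) against a general differential-operator-valued $1$-cochain $b_{\lambda,\lambda+k}$ and checks that the resulting linear system on its coefficients is inconsistent; you need that, or an argument that controls the explicit $\mathfrak{sl}(2)$-defect $\delta A(X,\cdot)=B(X,\cdot)$ of the candidate primitive, for $k=\tfrac{3}{2},2,\tfrac{5}{2}$.

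For the remaining cases your reduction is sound and coincides with the paper's: for $k\in\{3,\tfrac{7}{2},4,\tfrac{9}{2},5\}$ every cocycle entering the cup-product has weight jump at least $\tfrac{3}{2}$ and vanishes on all of $\mathfrak{osp}(1|2)$, so $B_{\lambda,\lambda+k}$ vanishes whenever one argument lies in $\mathfrak{osp}(1|2)$; a primitive can then be normalized to be equivariant and must be a multiple of the supertransvectant $\frak{J}^{-1,\lambda}_{k+1}$ (mind the index: $A$ maps $\frak{F}_{-1}\otimes\frak{F}_{\lambda}$ to $\frak{F}_{\lambda+k}$, so the relevant transvectant is $\frak{J}^{-1,\lambda}_{k+1}$, not $\frak{J}^{-1,\lambda}_{k}$ as you wrote). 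Solving $\delta\bigl(c\,\frak{J}^{-1,\lambda}_{k+1}\bigr)=B_{\lambda,\lambda+k}$ reproduces the paper's coefficients $\psi,\alpha,\nu$ for $k=\tfrac{7}{2},4,\tfrac{9}{2}$ and has no solution for $k=3,5$, which is where non-triviality comes from in those two cases. Two small corrections: $\gamma_{\lambda,\lambda+3/2}(v_G)(F)=\overline{\eta}(G'')F$ does in fact vanish on $\mathfrak{sl}(2)$, so the special treatment you propose for $k=\tfrac{3}{2}$ is aimed at the wrong culprit (the problem is again the $\gamma_{\mu,\mu}$ factors); and your concluding worry about whether supertransvectants exhaust the invariants is legitimate but secondary to the equivariance failure above.
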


\begin{proof}

A 2-cocycles $B_{\lambda,\lambda+k}$ for $k \in
\{\frac{3}{2},2,\frac{5}{2},3,\frac{7}{2},4,\frac{9}{2} ,5\}$ is a
coboundary if and only if satisfy:
\begin{equation}
\label{kaouthar2}B_{\lambda,\lambda+k}(G,H)(F)=\delta
b_{\lambda,\lambda+k}(G,H)(F)
\end{equation}
where
\begin{equation}\label{cobord}
\renewcommand{\arraystretch}{1.4}\begin{array}{c}
 {} b_{\lambda,\lambda+k}:\mathcal{K}(1)\longrightarrow\frak{D}_{\lambda,\lambda+k} \\
  {}\hbox{and where } \\{}\delta
b_{\lambda,\lambda+k}(G,H)(F)=b_{\lambda,\lambda+k}[G,H](F)-(-1)^{|G||b_{\lambda,\lambda+k}|}\mathcal{L}_G^{\lambda,\lambda+k}\circ
(b_{\lambda,\lambda+k})(H)(F)\\{}\hskip 2.6cm
+(-1)^{|G|(|H|+|b_{\lambda,\lambda+k}|)}\mathcal{L}_H^{\lambda,\lambda+k}\circ(b_{\lambda,\lambda+k})(G)(F) \\
\end{array}
\end{equation}
\end{proof}
For  $k \in \{\frac{3}{2},2,\frac{5}{2},3,5\}$, a direct computation
shows that those $B_{\lambda,\lambda+k}$ are non trivial 2-cocycles.

For  $k \in \{\frac{7}{2},4,\frac{9}{2}\}$, remark that those
cup-products are $\frak{osp}(1|2)$-invariant then they are
supertransvectant boundaries. A simple computation shows that:

$B_{\lambda,\lambda+4}=\alpha(\lambda,t_{\lambda})\,\delta\frak{J}^{-1,\lambda}_5
$ where
$$\alpha(\lambda,t_{\lambda})=T\frac{-3(\lambda+1)(2\lambda+1)}{5(2\lambda+4)
(2\lambda^2+7\lambda+2)},$$ and $T=(t_{\lambda
+\frac{3}{2},\lambda+4} t_{\lambda,\lambda+\frac{3}{2}} +t_{\lambda
+\frac{5}{2},\lambda+4}
t_{\lambda,\lambda+\frac{5}{2}}+\frac{1}{3}t_{\lambda+2 ,\lambda+4}
t_{\lambda,\lambda+2})$\vskip 0.5cm
$B_{\lambda,\lambda+\frac{7}{2}}=\psi(\lambda,t_{\lambda})\,\delta\frak{J}^{-1,\lambda}_{\frac{9}{2}}
$ and
$$\psi(\lambda,t_{\lambda})=T'\frac{2\lambda(2\lambda+1)}{2\lambda+3},$$where $T'=(t_{\lambda
-\frac{3}{2},\lambda+\frac{7}{2}}
t_{\lambda,\lambda+\frac{3}{2}}-t_{\lambda +2,\lambda+\frac{7}{2}}
t_{\lambda,\lambda+2})$\vskip 0.5cm
$B_{\lambda,\lambda+\frac{9}{2}}=\nu(\lambda,t_{\lambda})\,\delta\frak{J}^{-1,\lambda}_{\frac{11}{2}}$
and
$$\nu(\lambda,t_{\lambda})=-T''\frac{5(\lambda+4)}{\lambda(\lambda+1)(2\lambda+1)}$$where
$T''=(t_{\lambda+2,\lambda+\frac{9}{2}}
t_{\lambda,\lambda+2}-t_{\lambda+\frac{5}{2},\lambda+\frac{9}{2}}
t_{\lambda,\lambda+\frac{5}{2}})$.

\section{ Integrability Conditions}

In this section we obtain the necessary and sufficient integrability
conditions for the infinitesimal deformation (\ref{infdef1}).

\begin{Theorem}\label{th2}: The following  conditions\newline
1) For $2(\delta-\lambda)\in\{3,\, \dots,\,n\}$ and $\lambda\neq
-\frac{1}{2}$
\begin{equation*}
t_{\lambda, \lambda+\frac{3}{2}}\,t_{\lambda,
\lambda}-t_{\lambda+\frac{3}{2}, \lambda+\frac{3}{2}}\,t_{\lambda,
\lambda+\frac{3}{2}}=0.
\end{equation*}
2) For
   $ 2(\delta-\lambda)\in\{4,\,
\dots,\,n\}$
\begin{equation*}
t_{\lambda,\lambda+2} t_{\lambda,\lambda}=t_{\lambda+2,\lambda+2}
t_{\lambda,\lambda+2}=0,
\end{equation*}
 3) For $2(\delta-\lambda)\in\{5,\,
\dots,\,n\}$ and $\lambda\neq -1$.
\begin{equation*}
t_{\lambda,
\lambda+\frac{5}{2}}t_{\lambda,\lambda}=t_{\lambda+\frac{5}{2},
\lambda+\frac{5}{2}}t_{\lambda,\lambda+\frac{5}{2}}=0,
\end{equation*}
4) For $2(\delta-\lambda)\in\{6,\, \dots,\,n\}$
\begin{equation*}
t_{\lambda +\frac{3}{2},\lambda+3}
t_{\lambda,\lambda+\frac{3}{2}}=0,
\end{equation*}
5) For $2(\delta-\lambda)\in\left\{7,\, \dots,\,n\right\}$
\begin{equation*}
\begin{array}{c}
 t_{\lambda+\frac{7}{2},\lambda+\frac{7}{2}}(t_{\lambda
+\frac{3}{2},\lambda+\frac{7}{2}}
t_{\lambda,\lambda+\frac{3}{2}}-t_{\lambda +2,\lambda+\frac{7}{2}}
t_{\lambda,\lambda+2})=0 \\
  (t_{\lambda +\frac{3}{2},\lambda+\frac{7}{2}}
t_{\lambda,\lambda+\frac{3}{2}}-t_{\lambda +2,\lambda+\frac{7}{2}}
t_{\lambda,\lambda+2})t_{\lambda,\lambda}=0 \\
\end{array}
\end{equation*}
6) For $2(\delta-\lambda)\in\left\{8,\, \dots,\,n\right\}$
\begin{equation*}
\begin{array}{c}
   t_{\lambda+4, \lambda+4}(t_{\lambda +\frac{3}{2},\lambda+4}
t_{\lambda,\lambda+\frac{3}{2}} +t_{\lambda +\frac{5}{2},\lambda+4}
t_{\lambda,\lambda+\frac{5}{2}}+\frac{1}{3}t_{\lambda+2
,\lambda+4} t_{\lambda,\lambda+2})=0, \\
  (t_{\lambda +\frac{5}{2},\lambda+4}
t_{\lambda,\lambda+\frac{5}{2}} +t_{\lambda +\frac{5}{2},\lambda+4}
t_{\lambda,\lambda+\frac{5}{2}}+\frac{1}{3}t_{\lambda+2
,\lambda+4} t_{\lambda,\lambda+2})t_{\lambda,\lambda}=0 \\
\end{array}
\end{equation*}
7) For $2(\delta-\lambda)\in\left\{9,\, \dots,\,n\right\}$
\begin{equation*}
\begin{array}{c}
t_{\lambda+\frac{9}{2},\lambda+\frac{9}{2}}(t_{\lambda+2,\lambda+\frac{9}{2}}
t_{\lambda,\lambda+2}-t_{\lambda+\frac{5}{2},\lambda+\frac{9}{2}}
t_{\lambda,\lambda+\frac{5}{2}})=0 \\
  (t_{\lambda+2,\lambda+\frac{9}{2}}
t_{\lambda,\lambda+2}-t_{\lambda+\frac{5}{2},\lambda+\frac{9}{2}}
t_{\lambda,\lambda+\frac{5}{2}})t_{\lambda,\lambda}=0 \\
\end{array}
\end{equation*}
 8) For $2(\delta-\lambda)\in\left\{10,\,
\dots,\,n\right\}$ \begin{equation*}
\begin{array}{c} t_{\lambda,
\lambda+\frac{5}{2}}\,t_{\lambda+\frac{5}{2},
\lambda+5}=0,\\
t_{\lambda+\frac{7}{2},\lambda+5}
\left(t_{\lambda+\frac{3}{2},\lambda+\frac{7}{2}}
t_{\lambda,\lambda+\frac{3}{2}}-
t_{\lambda+2,\lambda+\frac{7}{2}}t_{\lambda,\lambda+2}\right)=0,\\
\left(t_{\lambda+3,\lambda+5} t_{\lambda+\frac{3}{2},\lambda+3}-
t_{\lambda+\frac{7}{2},\lambda+5}t_{\lambda+\frac{3}{2},\lambda+\frac{7}{2}}\right)
t_{\lambda,\lambda+\frac{3}{2}} =0,
\end{array}
\end{equation*}
9) For $2(\delta-\lambda)\in\left\{11,\, \dots,\,n\right\}$
 \begin{equation*}
\begin{array}{c}
t_{\lambda+4,\lambda+\frac{11}{2}}\left(t_{\lambda+\frac{3}{2},\lambda+4}
t_{\lambda,\lambda+\frac{3}{2}}+
t_{\lambda+\frac{5}{2},\lambda+4}t_{\lambda,\lambda+\frac{5}{2}}+\frac{1}{3}
t_{\lambda+2,\lambda+4}t_{\lambda,\lambda+2}\right)=0,\\
\left(t_{\lambda+3,\lambda+\frac{11}{2}}
t_{\lambda+\frac{3}{2},\lambda+3}+
t_{\lambda+4,\lambda+\frac{11}{2}}t_{\lambda+\frac{3}{2},\lambda+4}+\frac{1}{3}
t_{\lambda+\frac{7}{2},\lambda+\frac{11}{2}}t_{\lambda+\frac{3}{2},\lambda+\frac{7}{2}}
\right)
t_{\l,\lambda+\frac{3}{2}}=0,\\
t_{\lambda+\frac{7}{2},\lambda+\frac{11}{2}}
\left(t_{\lambda+\frac{3}{2},\lambda+\frac{7}{2}}
t_{\lambda,\lambda+\frac{3}{2}}-
t_{\lambda+2,\lambda+\frac{7}{2}}t_{\lambda,\lambda+2}\right)=0,\\

\left(t_{\lambda+\frac{7}{2},\lambda+\frac{11}{2}}
t_{\lambda+2,\lambda+\frac{7}{2}}-
t_{\lambda+4,\lambda+\frac{11}{2}}t_{\lambda+2,\lambda+4}\right)t_{\lambda,\lambda+2}=0,
\end{array}
\end{equation*}
  10) For $2(\delta-\lambda)\in\left\{12,\,
\dots,\,n\right\}$\begin{equation*}
\begin{array}{c}
t_{\lambda+\frac{9}{2},\lambda+6}
\left(t_{\lambda+\frac{5}{2},\lambda+\frac{9}{2}}
t_{\lambda,\lambda+\frac{5}{2}}-
t_{\lambda+2,\lambda+\frac{9}{2}}t_{\lambda,\lambda+2}\right)=0,\\
 \left(t_{\lambda+4,\lambda+6}
t_{\lambda+\frac{3}{2},\lambda+4}-
t_{\lambda+\frac{7}{2},\lambda+6}t_{\lambda+\frac{3}{2},\lambda+\frac{7}{2}}\right)
t_{\lambda,\lambda+\frac{3}{2}}=0,\\
t_{\lambda+4,\lambda+6}\left(t_{\lambda+\frac{3}{2},\lambda+4}
t_{\lambda,\lambda+\frac{3}{2}}+
t_{\lambda+\frac{5}{2},\lambda+4}t_{\lambda,\lambda+\frac{5}{2}}+
\frac{1}{3}\,t_{\lambda+2,\lambda+4}t_{\lambda,\lambda+2}\right)=0,\\
\left(t_{\lambda+\frac{7}{2},\lambda+6}
t_{\lambda+2,\lambda+\frac{7}{2}}+
t_{\lambda+\frac{9}{2},\lambda+6}t_{\lambda+2,\lambda+\frac{9}{2}}+
\frac{1}{3}\,t_{\lambda+4,\lambda+6}t_{\lambda+2,\lambda+4}\right)t_{\lambda,\lambda+2}=0,\\
t_{\lambda+\frac{7}{2},\lambda+6}
\left(t_{\lambda+\frac{3}{2},\lambda+\frac{7}{2}}
t_{\lambda,\lambda+\frac{3}{2}}-
t_{\lambda+2,\lambda+\frac{7}{2}}t_{\lambda,\lambda+2}\right)=0,\\
 \left(t_{\lambda+4,\lambda+6}
t_{\lambda+\frac{5}{2},\lambda+4}-
t_{\lambda+\frac{9}{2},\lambda+6}t_{\lambda+\frac{5}{2},\lambda+\frac{9}{2}}\right)
t_{\lambda,\lambda+\frac{5}{2}}=0,
\end{array}
\end{equation*}
11)  For $2(\delta-\lambda)\in\left\{13,\,
\dots,\,n\right\}$\begin{equation*}
\begin{array}{c}
  t_{\lambda+4,\lambda+\frac{13}{2}}\left(t_{\lambda+\frac{3}{2},\lambda+4}
t_{\lambda,\lambda+\frac{3}{2}}+
t_{\lambda+\frac{5}{2},\lambda+4}t_{\lambda,\lambda+\frac{5}{2}}+
\frac{1}{3}\,t_{\lambda+2,\lambda+4}t_{\lambda,\lambda+2}\right)=0, \\
 \left(t_{\lambda+4,\lambda+\frac{13}{2}}
t_{\lambda+\frac{5}{2},\lambda+4}+
t_{\lambda+5,\lambda+\frac{13}{2}}t_{\lambda+\frac{5}{2},\lambda+5}+\frac{1}{3}\,
t_{\lambda+\frac{9}{2},\lambda+\frac{13}{2}}
t_{\lambda+\frac{5}{2},\lambda+\frac{9}{2}}\right) t_{\lambda,\lambda+\frac{5}{2}}=0, \\
  t_{\lambda+\frac{9}{2},\lambda+\frac{13}{2}}
\left(t_{\lambda+\frac{5}{2},\lambda+\frac{9}{2}}
t_{\lambda,\lambda+\frac{5}{2}}-
t_{\lambda+2,\lambda+\frac{9}{2}}t_{\lambda,\lambda+2}\right)=0, \\
\left(t_{\lambda+\frac{9}{2},\lambda+\frac{13}{2}}
t_{\lambda+2,\lambda+\frac{9}{2}}-
t_{\lambda+4,\lambda+\frac{13}{2}}t_{\lambda+2,\lambda+4}\right) t_{\lambda,\lambda+2}=0, \\
\end{array}
\end{equation*}
 12)  For $2(\delta-\lambda)\in\left\{14,\,
\dots,\,n\right\}$\begin{equation*}
\begin{array}{c}t_{\lambda+\frac{9}{2},\lambda+7}
\left(t_{\lambda+\frac{5}{2},\lambda+\frac{9}{2}}
t_{\lambda,\lambda+\frac{5}{2}}-
t_{\lambda+2,\lambda+\frac{9}{2}}t_{\lambda,\lambda+2}\right)=0,
\\\left(t_{\lambda+5,\lambda+7} t_{\lambda+\frac{5}{2},\lambda+5}-
t_{\lambda+\frac{9}{2},\lambda+7}t_{\lambda+\frac{5}{2},\lambda+\frac{9}{2}}\right)
t_{\lambda,\lambda+\frac{5}{2}},=0\\
\left(t_{\lambda+\frac{3}{2},\lambda+\frac{7}{2}}
t_{\lambda,\lambda+\frac{3}{2}}-
t_{\lambda+2,\lambda+\frac{7}{2}}t_{\lambda,\lambda+2}\right)\left(t_{\lambda+5,\lambda+7}
t_{\lambda+\frac{7}{2},\lambda+5}-
t_{\lambda+\frac{11}{2},\lambda+7}t_{\lambda+\frac{7}{2},\lambda+\frac{11}{2}}\right)=0
\end{array}
\end{equation*}
13) For $2(\delta-\lambda)\in\left\{15,\,
\dots,\,n\right\}$\begin{equation*}
\begin{array}{c}\left(t_{\lambda+\frac{11}{2},\lambda+\frac{15}{2}}
t_{\lambda+4,\lambda+\frac{11}{2}}-
t_{\lambda+6,\lambda+\frac{15}{2}}t_{\lambda+4,\lambda+6}\right)
\times\\\left(t_{\lambda+\frac{3}{2},\lambda+4}
t_{\lambda,\lambda+\frac{3}{2}}+
t_{\lambda+\frac{5}{2},\lambda+4}t_{\lambda,\lambda+\frac{5}{2}}+
\frac{1}{3}t_{\lambda+2,\lambda+4}t_{\lambda,\lambda+2}\right)=0,\\
\left(t_{\lambda+\frac{3}{2},\lambda+\frac{7}{2}}
t_{\lambda,\lambda+\frac{3}{2}}-
t_{\lambda+2,\lambda+\frac{7}{2}}t_{\lambda,\lambda+2}\right)\times\\\left(t_{\lambda+5,\lambda+\frac{15}{2}}
t_{\lambda+\frac{7}{2},\lambda+5}+
t_{\lambda+6,\lambda+\frac{15}{2}}t_{\lambda+\frac{7}{2},\lambda+6}
+\frac{1}{3}t_{\lambda+\frac{11}{2},\lambda+\frac{15}{2}}
t_{\lambda+\frac{7}{2},\lambda+\frac{11}{2}}\right)=0\end{array}
\end{equation*}
14) For $2(\delta-\lambda)\in\left\{16,\,
\dots,\,n\right\}$\begin{equation*}
\begin{array}{c}\left(t_{\lambda+\frac{11}{2},\lambda+8}
t_{\lambda+4,\lambda+\frac{11}{2}}+
t_{\lambda+\frac{13}{2},\lambda+8}
t_{\lambda+4,\lambda+\frac{13}{2}}+\frac{1}{3}t_{\lambda+6,\lambda+8}
t_{\lambda+4,\lambda+6}\right)\times\\\left(t_{\lambda+\frac{3}{2},\lambda+4}
t_{\lambda,\lambda+\frac{3}{2}}+
t_{\lambda+\frac{5}{2},\lambda+4}t_{\lambda,\lambda+\frac{5}{2}}+ \frac{1}{3}t_{\lambda+2,\lambda+4}t_{\lambda,\lambda+2}\right)=0,\\
\left(t_{\lambda+6,\lambda+8} t_{\lambda+\frac{9}{2},\lambda+6}-
t_{\lambda+\frac{13}{2},\lambda+8}t_{\lambda+\frac{9}{2},\lambda+\frac{13}{2}}\right)
\left(t_{\lambda+\frac{5}{2},\lambda+\frac{9}{2}}
t_{\lambda,\lambda+\frac{5}{2}}-
t_{\lambda+2,\lambda+\frac{9}{2}}t_{\lambda,\lambda+2}\right)=0
,\\\left(t_{\lambda+\frac{3}{2},\lambda+\frac{7}{2}}
t_{\lambda,\lambda+\frac{3}{2}}-
t_{\lambda+2,\lambda+\frac{7}{2}}t_{\lambda,\lambda+2}\right)\left(t_{\lambda+6,\lambda+8}
t_{\lambda+\frac{7}{2},\lambda+6}-
t_{\lambda+\frac{11}{2},\lambda+8}t_{\lambda+\frac{7}{2},\lambda+\frac{11}{2}}\right)=0,
\end{array}
\end{equation*}
15) For $2(\delta-\lambda)\in\left\{17,\,
\dots,\,n\right\}$\begin{equation*}
\begin{array}{c}\left(t_{\lambda+\frac{3}{2},\lambda+4}
t_{\lambda,\lambda+\frac{3}{2}}+
t_{\lambda+\frac{5}{2},\lambda+4}t_{\lambda,\lambda+\frac{5}{2}}+\frac{1}{3}t_{\lambda+2,\lambda+4}t_{\lambda,\lambda+2}\right)
\times\\ \left(t_{\lambda+\frac{13}{2},\lambda+\frac{17}{2}}
t_{\lambda+4,\lambda+\frac{13}{2}}-
t_{\lambda+6,\lambda+\frac{17}{2}}t_{\lambda+4,\lambda+6}\right)=0,\\
\left(t_{\lambda+6,\lambda+\frac{17}{2}}
t_{\lambda+\frac{9}{2},\lambda+6}+
t_{\lambda+7,\lambda+\frac{17}{2}}t_{\lambda+\frac{9}{2},\lambda+7}+
\frac{1}{3}\,t_{\lambda+\frac{13}{2},\lambda+\frac{17}{2}}
t_{\lambda+\frac{9}{2},\lambda+\frac{13}{2}}\right)\times \\
\left(t_{\lambda+\frac{5}{2},\lambda+\frac{9}{2}}
t_{\lambda,\lambda+\frac{5}{2}}-
t_{\lambda+2,\lambda+\frac{9}{2}}t_{\lambda,\lambda+2}\right)=0
\end{array}
\end{equation*}
16) For $2(\delta-\lambda)\in\left\{18,\,
\dots,\,n\right\}$\begin{equation*}
\begin{array}{c}\left(t_{\lambda+7,\lambda+9}
t_{\lambda+\frac{9}{2},\lambda+7}-
t_{\lambda+\frac{13}{2},\lambda+9}t_{\lambda+\frac{9}{2},\lambda+\frac{13}{2}}\right)
\left(t_{\lambda+\frac{5}{2},\lambda+\frac{9}{2}}
t_{\lambda,\lambda+\frac{5}{2}}-
t_{\lambda+2,\lambda+\frac{9}{2}}t_{\lambda,\lambda+2}\right)=0
\end{array}
\end{equation*}

are necessary and sufficient for integrability of the
deformation~(\ref{infdef1}).
\end{Theorem}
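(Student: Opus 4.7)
The plan is to integrate the infinitesimal deformation (\ref{infdef1}) order by order through the Maurer--Cartan hierarchy (\ref{maurrer cartank}), and at each step project the obstruction onto the non-trivial cohomology classes of Theorem \ref{th1}. Every projection yields either a polynomial relation on the parameters (one of the items (1)--(16)) or, for a trivial class, an explicit primitive which becomes the next-order term of the deformation.

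At second order the equation reads $\delta\mathcal{L}^{(2)}=-\tfrac{1}{2}[\![\mathcal{L}^{(1)},\mathcal{L}^{(1)}]\!]$. Grouping the right-hand side by target weight recovers the 2-cocycles $B_{\lambda,\lambda+k}$ listed just before Proposition \ref{2cocycles}. For $k\in\{\tfrac{3}{2},2,\tfrac{5}{2},3,5\}$ each $B_{\lambda,\lambda+k}$ represents a non-trivial $\HH^2$-class, so its scalar coefficient must vanish: this produces precisely items (1)--(4) and (8). For $k\in\{\tfrac{7}{2},4,\tfrac{9}{2}\}$, Proposition \ref{2cocycles}(b) exhibits $B_{\lambda,\lambda+k}$ as $\delta(\alpha\,\mathfrak{J}^{-1,\lambda}_{5})$, $\delta(\psi\,\mathfrak{J}^{-1,\lambda}_{9/2})$, $\delta(\nu\,\mathfrak{J}^{-1,\lambda}_{11/2})$ respectively; these primitives are set as the corresponding components of $\mathcal{L}^{(2)}$, so no condition arises at this order for those weights.

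At third order the obstruction $-[\![\mathcal{L}^{(1)},\mathcal{L}^{(2)}]\!]$ couples each 1-cocycle $\gamma_{\lambda+j,\lambda+k}$ with the supertransvectant components of $\mathcal{L}^{(2)}$ introduced at the previous step. Decomposing by target weight and projecting onto the non-trivial classes of $\HH^2(\cK(1),\frak{D}_{\lambda,\lambda+k'})$ yields relations of the shape (second-order primitive coefficient)$\cdot$(first-order cocycle coefficient)$=0$, which are exactly items (5)--(7) and (9)--(12). Iterating through orders four and five, where the only new ingredient is a cup-product of two primitives already computed, yields the longer products (13)--(16). The procedure terminates because $\widetilde{\cS}^n_\delta$ contains only finitely many weights and each primitive raises the target weight by an integer, so only finitely many target slots can ever be populated; this also justifies the abstract's claim that the full deformation is polynomial of degree at most two in the $t_i$.

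The delicate part is not necessity, which is immediate by projecting onto each non-trivial class, but sufficiency. The main obstacle is the bookkeeping: at each order one must enumerate every cup-product contributing to a given $B_{\lambda,\lambda+k}$, identify those already absorbed by primitives produced at earlier orders, and check that the residual scalar multiplying a non-trivial class vanishes exactly under the listed relation. Lemmas \ref{sa} and \ref{sl2} are the workhorses here, since each higher-order obstruction ends up being $\frak{osp}(1|2)$-invariant and reduces to a supertransvectant whose coboundary status was already settled in Proposition \ref{2cocycles}, so no cohomological input beyond Theorem \ref{th1} is needed to close the induction.
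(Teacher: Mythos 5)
Your proposal follows essentially the same route as the paper: the second-order Maurer--Cartan equation projected onto the non-trivial classes of Proposition \ref{2cocycles} gives the first batch of relations, the trivial classes are absorbed into an explicit $\mathcal{L}^{(2)}$ built from the supertransvectants $\frak{J}^{-1,\lambda}_{9/2},\frak{J}^{-1,\lambda}_{5},\frak{J}^{-1,\lambda}_{11/2}$, the third- and fourth-order obstructions $[\![\mathcal{L}^{(1)},\mathcal{L}^{(2)}]\!]$ and $[\![\mathcal{L}^{(2)},\mathcal{L}^{(2)}]\!]$ yield the remaining conditions via $\frak{osp}(1|2)$-invariance, and sufficiency follows from the freedom to normalize higher solutions of Maurer--Cartan to zero. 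The only slips are in bookkeeping (only the first equation of item 8 arises at second order, the rest at third; and the process stops at order four because $\mathcal{L}^{(3)}\equiv 0$, so no fifth-order step is needed), which do not affect the argument.
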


\begin{proofname}\label{proof}:

a) \underline{The conditions  of integrability are necessary}:

If we take account of the Proposition \ref{2cocycles}, we deduce the
integrability conditions  1), 2), 3) and 4).
 Now we must calculate the
higher integrability conditions. Assume that the infinitesimal
deformation (\ref{infdef1}) can be integrated to a formal
deformation:
\begin{equation*}
\widetilde{\frak L}_{v_F}=\frak{L}_{v_F}+{\frak
L}^{(1)}_{v_F}+{\frak L}^{(2)}_{v_F}+{\frak L}^{(3)}_{v_F}+\cdots
\end{equation*}
The homomorphism  condition:
\begin{equation*}
 [\widetilde{\frak L}_{v_F},\widetilde{\frak
L}_{v_G}]=\widetilde{\frak L}_{v_{\{F,G\}}}
\end{equation*}
gives, for the third-order terms ${\frak L}^{(3)}$ which is a
particular case of the   Maurer-Cartan equation (\ref{maurrer
cartank}):
\begin{equation}\label{L3}
\delta({\frak L}^{(3)})=-\frac{1}{2}([\![{\frak L}^{(1)},{\frak
L}^{(2)}]\!]+[\![{\frak L}^{(2)},{\frak
L}^{(1)}]\!]),\end{equation}where
\begin{equation*}{\frak L}^{(2) }=
-\left(\sum_\l\psi(\lambda,t_{\lambda})
\frak{J}_\frac{9}{2}^{-1,\lambda}+\sum_\l\alpha(\lambda,t_{\lambda})
\frak{J}_{5}^{-1,\lambda}+\sum_\l\nu(\lambda,t_{\lambda})
\frak{J}_\frac{11}{2}^{-1,\lambda}\right).\end{equation*}
 The
right hand side of  (\ref{L3}) yields the following maps:

$\checkmark$ For $k=\frac{7}{2}$, let
\begin{equation*}
\begin{array}{c}
  D_{\lambda,\lambda+\frac{7}{2}}=  t_{\lambda+\frac{7}{2},\lambda+\frac{7}{2}}
 \psi(\lambda,t_{\lambda})
[\![\gamma_{\lambda+\frac{7}{2},\lambda+\frac{7}{2}},\frak{J}_{\frac{9}{2}}^{-1,\lambda}]\!]
+\psi(\lambda,t_{\lambda})t_{\lambda,\lambda}
[\![\frak{J}_{\frac{9}{2}}^{-1,\lambda},\gamma_{\lambda,\lambda}]\!] \\
  \hskip3cm:\mathcal{K}(1)\times \mathcal{K}(1)\longrightarrow
\frak{D}_{\lambda,\lambda+\frac{7}{2}}, \\
\end{array}
\end{equation*}

$\checkmark$ For $k=4$, let
\begin{equation*}
\begin{array}{c}
  D_{\lambda,\lambda+4}=  t_{\lambda+4,\lambda+4}
 \alpha(\lambda,t_{\lambda})
[\![\gamma_{\lambda+4,\lambda+4},\frak{J}_{5}^{-1,\lambda}]\!]
+\alpha(\lambda,t_{\lambda})t_{\lambda,\lambda}
[\![\frak{J}_{5}^{-1,\lambda},\gamma_{\lambda,\lambda}]\!] \\
  \hskip3cm:\mathcal{K}(1)\times \mathcal{K}(1)\longrightarrow
\frak{D}_{\lambda,\lambda+4}, \\
\end{array}
\end{equation*}

 $\checkmark$ For $k=\frac{9}{2}$, let
\begin{equation*}
\begin{array}{c}
  D_{\lambda,\lambda+\frac{9}{2}}=  t_{\lambda+\frac{9}{2},\lambda+\frac{9}{2}}
 \nu(\lambda,t_{\lambda})
[\![\gamma_{\lambda+\frac{9}{2},\lambda+\frac{9}{2}},\frak{J}_{\frac{11}{2}}^{-1,\lambda}]\!]
+\nu(\lambda,t_{\lambda})t_{\lambda,\lambda}
[\![\frak{J}_{\frac{11}{2}}^{-1,\lambda},\gamma_{\lambda,\lambda}]\!] \\
 \hskip3cm :\mathcal{K}(1)\times \mathcal{K}(1)\longrightarrow
\frak{D}_{\lambda,\lambda+\frac{9}{2}}, \\
\end{array}
\end{equation*}
 $\checkmark$ For $k=5$, let
\begin{equation*}
\begin{array}{cc}
 D_{\lambda,\lambda+5}=  t_{\lambda+\frac{7}{2},\lambda+5}
 \psi(\lambda,t_{\lambda})
[\![\gamma_{\lambda+\frac{7}{2},\lambda+5},\frak{J}_{\frac{9}{2}}^{-1,\lambda}]\!]
+\psi(\lambda+\frac{3}{2},t_{\lambda+\frac{3}{2}})t_{\lambda,\lambda+\frac{3}{2}}
[\![\frak{J}_{\frac{9}{2}}^{-1,\lambda+\frac{3}{2}},\gamma_{\lambda,\lambda+\frac{3}{2}}]\!]
\\:\mathcal{K}(1)\times \mathcal{K}(1)\longrightarrow
\frak{D}_{\lambda,\lambda+5},
\end{array}
\end{equation*}

$\checkmark$  For $k=\frac{11}{2}$, let
\begin{equation*}
\begin{array}{cc}
  D_{\lambda,\lambda+\frac{11}{2}}=t_{\lambda+4,\lambda+\frac{11}{2}}
 \alpha(\lambda,t_{\lambda})
[\![\gamma_{\lambda+4,\lambda+\frac{11}{2}},\frak{J}_{5}^{-1,\lambda}]\!]
   +\alpha(\lambda+\frac{3}{2},t_{\lambda+\frac{3}{2}})t_{\lambda,\lambda+\frac{3}{2}}
    [\![\frak{J}_{5}^{-1,\lambda+\frac{3}{2}},
\gamma_{\lambda,\lambda+\frac{3}{2}}]\!]\\
t_{\lambda+\frac{7}{2},\lambda+\frac{11}{2}}
  \psi(\lambda,t_{\lambda})
[\![\gamma_{\lambda+\frac{7}{2},\lambda+\frac{11}{2}},\frak{J}_{\frac{9}{2}}^{-1,\lambda}]\!]
   +\psi(\lambda+2,t_{\lambda+2})t_{\lambda,\lambda+\frac{3}{2}}
   [\![\frak{J}_{\frac{9}{2}}^{-1,\lambda+2},
\gamma_{\lambda,\lambda+\frac{3}{2}}]\!] \\:
 \mathcal{K}(1)\times \mathcal{K}(1)\longrightarrow
\frak{D}_{\lambda,\lambda+\frac{11}{2}}
\end{array}
\end{equation*}

$\checkmark$ For $k=6$, let
\begin{equation*}
\begin{array}{cc}
 D_{\lambda,\lambda+6}=  t_{\lambda+\frac{7}{2},\lambda+6}
 \psi(\lambda,t_{\lambda})
[\![\gamma_{\lambda+\frac{7}{2},\lambda+6},\frak{J}_{\frac{9}{2}}^{-1,\lambda}]\!]
+\psi(\lambda+\frac{5}{2},t_{\lambda+\frac{5}{2}})t_{\lambda,\lambda+\frac{5}{2}}
[\![\frak{J}_{\frac{9}{2}}^{-1,\lambda+\frac{5}{2}},
\gamma_{\lambda,\lambda+\frac{5}{2}}]\!]\\+t_{\lambda+\frac{9}{2},\lambda+6}
 \nu(\lambda,t_{\lambda})
[\![\gamma_{\lambda+\frac{9}{2},\lambda+6},\frak{J}_{\frac{11}{2}}^{-1,\lambda}]\!]
+\nu(\lambda+\frac{3}{2},t_{\lambda+\frac{3}{2}})t_{\lambda,\lambda+\frac{3}{2}}
[\![\frak{J}_{\frac{11}{2}}^{-1,\lambda+\frac{3}{2}},
\gamma_{\lambda,\lambda+\frac{3}{2}}]\!]\\+t_{\lambda+4,\lambda+6}
 \alpha(\lambda,t_{\lambda})
[\![\gamma_{\lambda+4,\lambda+6},\frak{J}_{5}^{-1,\lambda}]\!]
+\alpha(\lambda+2,t_{\lambda+2})t_{\lambda,\lambda+2}
[\![\frak{J}_{5}^{-1,\lambda+2}, \gamma_{\lambda,\lambda+2}]\!]
\\:\mathcal{K}(1)\times \mathcal{K}(1)\longrightarrow
\frak{D}_{\lambda,\lambda+6},
\end{array}
\end{equation*}

$\checkmark$ For $ k=\frac{13}{2}$, let
\begin{equation*}
\begin{array}{cc}
  D_{\lambda,\lambda+\frac{13}{2}}=  t_{\lambda+4,\lambda+\frac{13}{2}}
  \alpha(\lambda,t_{\lambda})
[\![\gamma_{\lambda+4,\lambda+\frac{13}{2}},\frak{J}_{5}^{-1,\lambda}]\!]
  +\alpha(\lambda+\frac{5}{2},t_{\lambda+\frac{5}{2}})
   t_{\lambda,\lambda+\frac{5}{2}} [\![\frak{J}_{5}^{-1,\lambda+\frac{5}{2}},
\gamma_{\lambda,\lambda+\frac{5}{2}}]\!]\\
+t_{\lambda+\frac{9}{2},\lambda+\frac{13}{2}}
  \nu(\lambda,t_{\lambda})
[\![\gamma_{\lambda+\frac{9}{2},\lambda+\frac{13}{2}},\frak{J}_{\frac{11}{2}}^{-1,\lambda}]\!]
  +\nu(\lambda+2,t_{\lambda+2})
   t_{\lambda,\lambda+2} [\![\frak{J}_{\frac{11}{2}}^{-1,\lambda+2},
\gamma_{\lambda,\lambda+2}]\!]\\:
 \mathcal{K}(1)\times \mathcal{K}(1)\longrightarrow
\frak{D}_{\lambda,\lambda+\frac{13}{2}}
\end{array}
\end{equation*}
 A direct and elementary
computation  for these cup-products  gives the conditions 5), 6),
7), 8), 9), 10), 11) and the tow first conditions of 12) of Theorem
\ref{th2} and proves that ${\frak L}^{(3)}\equiv 0$. We must then
calculate ${\frak L}^{(4)}$:
\begin{equation}\label{L4}
\delta({\frak L}^{(4)})=-[\![{\frak L}^{(2)},{\frak
L}^{(2)}]\!].\end{equation} Equation (\ref{L4}) is in fact
equivalent to the following ones:

$\checkmark$ For $k=7$, let
\begin{equation*}
\begin{array}{c}
 \Omega_{\lambda,\lambda+7}=\psi(\lambda+\frac{7}{2},t_{\lambda+\frac{7}{2}})
 \psi(\lambda,t_{\lambda})[\![\frak{J}_{\frac{9}{2}}^{-1,\lambda+\frac{7}{2}},
\frak{J}_{\frac{9}{2}}^{-1,\lambda}]\!]:\mathcal{K}(1)\times
\mathcal{K}(1)\longrightarrow \frak{D}_{\lambda,\lambda+7}
\end{array}\end{equation*}

 $\checkmark$ For $k=\frac{15}{2}$, let
\begin{equation*}
\begin{array}{c}
   \Omega_{\lambda,\lambda+\frac{15}{2}}= \psi(\lambda+4,t_{\lambda+4})
 \alpha(\lambda,t_{\lambda})[\![\frak{J}_{\frac{9}{2}}^{-1,\lambda+4},
\frak{J}_{5}^{-1,\lambda}]\!]
  +\alpha(\lambda+\frac{7}{2},t_{\lambda+\frac{7}{2}})
 \psi(\lambda,t_{\lambda})[\![\frak{J}_{5}^{-1,\lambda\frac{7}{2}},
\frak{J}_{\frac{9}{2}}^{-1,\lambda}]\!] \\
 :\mathcal{K}(1)\times
\mathcal{K}(1)\longrightarrow
\frak{D}_{\lambda,\lambda+\frac{15}{2}} \\
\end{array}
\end{equation*}

 $\checkmark$ For $k=8$, let
\begin{equation*}
\begin{array}{c}
 \Omega_{\lambda,\lambda+8}=\alpha(\lambda+4,t_{\lambda+4})
 \alpha(\lambda,t_{\lambda})[\![\frak{J}_{5}^{-1,\lambda+4},
\frak{J}_{5}^{-1,\lambda}]\!]+\nu(\lambda+\frac{7}{2},t_{\lambda+\frac{7}{2}})
\psi(\lambda,t_{\lambda})[\![\frak{J}_{\frac{11}{2}}^{-1,\lambda+\frac{7}{2}},
\frak{J}_{\frac{9}{2}}^{-1,\lambda}]\!]\\+\psi(\lambda+\frac{9}{2},t_{\lambda+\frac{9}{2}})
\nu(\lambda,t_{\lambda})
[\![\frak{J}_{\frac{9}{2}}^{-1,\lambda+\frac{9}{2}},
\frak{J}_{\frac{11}{2}}^{-1,\lambda}]\!]:\mathcal{K}(1)\times
\mathcal{K}(1)\longrightarrow \frak{D}_{\lambda,\lambda+8}
\end{array}\end{equation*}

$\checkmark$ For $k=\frac{17}{2}$, let
\begin{equation*}
\begin{array}{c}
 \Omega_{\lambda,\lambda+\frac{17}{2}}=\nu(\lambda+4,t_{\lambda+4})
 \alpha(\lambda,t_{\lambda})[\![\frak{J}_{\frac{11}{2}}^{-1,\lambda+4},
\frak{J}_{5}^{-1,\lambda}]\!]+\alpha(\lambda+\frac{9}{2},t_{\lambda+\frac{9}{2}})
\nu(\lambda,t_{\lambda})[\![\frak{J}_{5}^{-1,\lambda+\frac{9}{2}},
\frak{J}_{\frac{11}{2}}^{-1,\lambda}]\!]\\\mathcal{K}(1)\times
\mathcal{K}(1)\longrightarrow
\frak{D}_{\lambda,\lambda+\frac{17}{2}}
\end{array}\end{equation*}

$\checkmark$ For $k=9$, let
\begin{equation*}
\begin{array}{c}
 \Omega_{\lambda,\lambda+9}=\nu(\lambda+\frac{9}{2},t_{\lambda+\frac{9}{2}})
 \nu(\lambda,t_{\lambda})[\![\frak{J}_{\frac{11}{2}}^{-1,\lambda+\frac{9}{2}},
\frak{J}_{\frac{11}{2}}^{-1,\lambda}]\!]:\mathcal{K}(1)\times
\mathcal{K}(1)\longrightarrow \frak{D}_{\lambda,\lambda+9}.
\end{array}\end{equation*}
Necessery conditions for the integrability of the infinitesimal
deformation are that the differential operators\,
$\Omega_{\lambda,\lambda+k}(G,H)$ for $k\in
\{7,\frac{15}{2},\ldots,9\}$ must be coboundary. But differential
operators\, $\Omega_{\lambda,\lambda+k}(G,H)$  are
$\frak{osp}(1|2)$-invariantes, then they must be boundaries of
supertransvectants, so they satisfy
$$\Omega_{\lambda,\lambda+k}=A_k(\lambda,t_{\lambda})
\,\delta(\frak{J}^{-1,\lambda}_{k+1}).$$

 A straightforward computation shows that
 $A_k(\lambda,t_{\lambda})$ must be zero.

\vskip.3cm b) \underline{The  conditions of integrability are
sufficient}\vskip.3cm

 The solution
$\frak{L}^{(m)}$ of the Maurer-Cartan equation is defined up to a
1-cocycle and it has been shown in~\cite{Fialowski99, AgrALO02} that
different choices of solutions of the Maurer-Cartan equation
correspond to equivalent deformations. Thus, we can always reduce
$\frak{L}^{(3)}$ and $\frak{L}^{(4)}$ to zero by equivalence. Then,
by recurrence, the terms $\frak{L}^{(m)}$, for $m\geq4$, satisfy the
equation $\delta(\frak{L}^{(m)})=0$ and can also be reduced to the
identically zero map. This completes the proof of Theorem \ref{th2}.
\end{proofname}
\section{An open problem}
It seems \ to be \ an interesting \ open \ problem \ to compute \
the \ full \ cohomology \ ring \ $\mathrm{H}^\ast_{\rm
diff}\left({\mathcal{K}}(1);\frak{D}_{\lambda,\lambda+k}\right)$.
The only complete result here concerns the first cohomology space.
Proposition \ref{2cocycles} provides a lower bound for the dimension
of the second cohomology space. We formulate
\begin{Conjecture}
The space of second cohomology of $\cK(1)$ with coefficients in the
superspace  $\frak{D}_{\lambda,\mu}$ has the following structure:
\begin{equation*}
\HH^2_{\rm diff}(\cK(1),\frak{D}_{\lambda,\mu})_\simeq\left\{
\begin{array}{ll}
\bbR&\makebox{ if }~~\mu-\lambda=\frac{3}{2}\makebox{ and
}~~\lambda\neq -\frac{1}{2},
\\[2pt]\bbR&\makebox{ if }~~\mu-\lambda=\frac{5}{2} \makebox{ and }
 \lambda\neq -1,
\\[2pt]\bbR&\makebox{ if }~~\mu-\lambda\in \{2,3,5\}\makebox{ for  all}~~\lambda
,
\\[2pt]0&\makebox { otherwise. }
\end{array}
\right.
\end{equation*}
\end{Conjecture}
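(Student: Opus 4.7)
\medskip

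The proposal is to upgrade the lower bound obtained in Proposition~\ref{2cocycles} into an exact dimension count by classifying $2$-cocycles up to coboundary, in direct parallel with the strategy used in Theorem~\ref{th1} for~$\HH^1$. First, I would split a given $2$-cocycle $\Gamma\in Z^2(\cK(1),\frak{D}_{\lambda,\mu})$ into its homogeneous components with respect to the $\bbZ/2\bbZ$-grading \eqref{grade}, and then further decompose each component under the restriction to $\Vect_p(\bbR)\subset\cK(1)$ using $\frak{D}_{\lambda,\mu}\simeq\cD_{\lambda,\mu}\oplus\cD_{\lambda+\frac12,\mu+\frac12}\oplus\Pi(\cD_{\lambda+\frac12,\mu}\oplus\cD_{\lambda,\mu+\frac12})$. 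This reduces the computation to pieces that can be compared with the known second cohomology of $\Vect_p(\bbR)$ acting on spaces of differential operators (Feigin--Fuchs).

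The second step is an $\mathfrak{osp}(1|2)$-relative reduction. I would prove an analogue of Lemma~\ref{sl2} at the level of $2$-cochains: any $2$-cocycle is cohomologous to one that vanishes on $\mathfrak{osp}(1|2)\otimes\cK(1)+\cK(1)\otimes\mathfrak{osp}(1|2)$, hence can be taken to be $\mathfrak{osp}(1|2)$-invariant. Using the isomorphism $\cK(1)\simeq\fF_{-1}$, such an invariant $2$-cochain is a bilinear $\mathfrak{osp}(1|2)$-invariant differential operator $\fF_{-1}\otimes\fF_{-1}\otimes\fF_\lambda\to\fF_\mu$, and all such operators are built from the supertransvectants $\frak{J}^{\alpha,\beta}_k$ of \eqref{supertransvectants}. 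This step is what converts a cohomological question into a finite-dimensional linear problem: for each prescribed weight difference $k=\mu-\lambda$, the space of $\mathfrak{osp}(1|2)$-invariant candidates is explicitly parameterized by finitely many supertransvectant compositions.

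The third step is to impose the cocycle equation $\delta\Gamma=0$ on each candidate and then quotient by $\mathfrak{osp}(1|2)$-invariant coboundaries $\delta b$, where $b:\cK(1)\to\frak{D}_{\lambda,\mu}$ is itself an $\mathfrak{osp}(1|2)$-invariant $1$-cochain (again a supertransvectant). The non-trivial cup-products $B_{\lambda,\lambda+k}$ from Proposition~\ref{2cocycles} for $k\in\{\tfrac32,2,\tfrac52,3,5\}$ supply explicit representatives in precisely the cases claimed by the conjecture, while the computations already carried out for $k\in\{\tfrac72,4,\tfrac92\}$ and the $\Omega_{\lambda,\lambda+k}$ analysis in the proof of Theorem~\ref{th2} show that certain supertransvectant candidates are in fact coboundaries $\delta\frak{J}^{-1,\lambda}_{k+1}$. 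I would systematically redo this coboundary check for every $k$ not appearing in the conjecture, producing explicit $b_{\lambda,\lambda+k}$ witnessing triviality.

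The main obstacle will be a completeness argument: for the conjectured zero values of $\HH^2$ one must show that \emph{every} $\mathfrak{osp}(1|2)$-invariant $2$-cocycle is a coboundary, not merely the specific cup-products that arose from integrability. This requires enumerating all $\mathfrak{osp}(1|2)$-invariant bilinear differential operators of the relevant bidegree, determining the rank of the map $\delta$ from invariant $1$-cochains into this space, and exhibiting that the cokernel has exactly the claimed dimension. The delicate exceptional weights $\lambda=-\tfrac12$ (for $k=\tfrac32$) and $\lambda=-1$ (for $k=\tfrac52$) will require separate treatment, since the relevant coefficients in the numeric formulas \eqref{numeric coefficients} degenerate there, producing either extra invariants or extra coboundary relations; tracking these degenerations case by case is the technical heart of the argument.
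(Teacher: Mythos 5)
You should first note that the paper does not prove this statement at all: it is explicitly formulated as a \emph{Conjecture} in the section ``An open problem,'' and the authors say in so many words that Proposition~\ref{2cocycles} only provides a lower bound for the dimension of $\HH^2$. So there is no proof in the paper to compare yours against; the honest benchmark is whether your proposal closes the gap the authors left open, and it does not.

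Your plan is the natural one and correctly identifies the two halves of the problem: the cup-products of Proposition~\ref{2cocycles} give the nonzero classes for $\mu-\lambda\in\{\tfrac32,2,\tfrac52,3,5\}$ (the lower bound, which is already in the paper), and what remains is the upper bound. But the two steps you rely on for the upper bound are both unestablished. First, the $2$-cochain analogue of Lemma~\ref{sl2} is not a formal consequence of the $1$-cochain version: reducing a $2$-cocycle to an $\mathfrak{osp}(1|2)$-invariant representative requires controlling the relative-versus-absolute cohomology comparison in degree $2$ (via a Hochschild--Serre type argument for the subalgebra $\mathfrak{osp}(1|2)\subset\cK(1)$), and the extra terms coming from $\HH^1$ of $\mathfrak{osp}(1|2)$ with coefficients in the relevant modules have to be shown to vanish; you assert this reduction rather than prove it. Second, even granting invariance, your ``completeness argument'' --- enumerating all $\mathfrak{osp}(1|2)$-invariant bilinear operators $\fF_{-1}\otimes\fF_{-1}\otimes\fF_\lambda\to\fF_\mu$ of each bidegree, imposing $\delta\Gamma=0$, and computing the cokernel of $\delta$ on invariant $1$-cochains --- is precisely the content of the conjecture, and you explicitly defer it. As written, your argument establishes nothing beyond what Proposition~\ref{2cocycles} already gives, namely the lower bound; the claimed vanishing in all other cases, and the exact dimension $\bbR$ (rather than something larger) in the listed cases, remain open. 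The degenerate weights $\lambda=-\tfrac12$ and $\lambda=-1$ that you flag are a further unresolved layer on top of this.
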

\section{Examples}
 We study deformations of
$\mathcal{K}(1)$-modules $\widetilde{ \mathcal{S}}^n_{\lambda+n}$
for any $n\in\mathbb{N}$ and for arbitrary generic
$\lambda\in\mathbb{R}.$
\begin{example}\textbf{1.} Let us consider the $\mathcal{K}(1)$-modules
$\widetilde{{ \mathcal{S}}}_{\lambda}^{0}$ and $\widetilde{{
\mathcal{S}}}_{\lambda+1}^{1}$.
\end{example}
\begin{Proposition} Every deformation of $\mathcal{K}(1)$-modules
$\widetilde{{ \mathcal{S}}}_{\lambda}^{0}$ and
$\widetilde{\mathcal{S}}_{\lambda+1}^{1}$ is equivalent to
infinitesimal one.
\end{Proposition}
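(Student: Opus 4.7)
The plan is to exploit the extreme paucity of non-trivial cocycles for these two modules, together with the vanishing of the relevant cup-products, to show that the second-order Maurer--Cartan obstruction vanishes identically; the argument at the end of the proof of Theorem \ref{th2} then lets us absorb every higher-order term into an equivalence.

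First I would read off, from Theorem \ref{th1}, which 1-cocycles actually contribute to $\mathrm{H}^1_{\mathrm{diff}}(\cK(1),\End_{\mathrm{diff}}(\widetilde{\cS}^n_\delta))$ for $n=0$ and $n=1$. For $\widetilde{\cS}^0_\lambda=\fF_\lambda$ the only pair $(\lambda',\mu)$ with $2(\delta-\lambda'),\,2(\delta-\mu)\in\{0\}$ is $\lambda'=\mu=\lambda$, so the unique generator is $\gamma_{\lambda,\lambda}$. For $\widetilde{\cS}^1_{\lambda+1}=\fF_{\lambda+1}\oplus\fF_{\lambda+\frac12}$ with generic $\lambda$, the available weight differences are $0$ and $\pm\frac12$; the even part of $\mathrm H^1$ vanishes when $\mu-\lambda'=\frac12$, and the odd part at $\mu-\lambda'=\frac12$ requires $\lambda'=0$, which is excluded by genericity. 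Hence only the diagonal cocycles $\gamma_{\lambda+1,\lambda+1}$ and $\gamma_{\lambda+\frac12,\lambda+\frac12}$ contribute. In both cases the infinitesimal deformation has the form $\cL^{(1)}=\sum_\mu t_{\mu,\mu}\gamma_{\mu,\mu}$ with at most two parameters.

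Next I would compute the second-order Maurer--Cartan equation $\delta\cL^{(2)}=-\tfrac12[\![\cL^{(1)},\cL^{(1)}]\!]$. Two kinds of cup-products appear. The diagonal terms $[\![\gamma_{\mu,\mu},\gamma_{\mu,\mu}]\!]$ are exactly the $2$-cocycles $B_{\mu,\mu}$ explicitly computed in Section~4, where the paper records $B_{\mu,\mu}(G,H)=0$. The cross terms $[\![\gamma_{\lambda+1,\lambda+1},\gamma_{\lambda+\frac12,\lambda+\frac12}]\!]$ appearing in the $n=1$ case vanish for a structural reason: $\gamma_{\lambda+1,\lambda+1}(G)$ lives in $\fD_{\lambda+1,\lambda+1}$ while $\gamma_{\lambda+\frac12,\lambda+\frac12}(H)$ lives in $\fD_{\lambda+\frac12,\lambda+\frac12}$, so the compositions $\gamma_1(G)\circ\gamma_2(H)$ and $\gamma_2(G)\circ\gamma_1(H)$ in the cup-product formula~(\ref{maurrer cartan1}) have no matching domain/codomain and are therefore zero when viewed as endomorphisms of $\widetilde{\cS}^1_{\lambda+1}$. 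Consequently the right-hand side of the Maurer--Cartan equation vanishes identically, and we may set $\cL^{(2)}=0$.

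Finally, I would finish by recurrence exactly as in part (b) of the proof of Theorem \ref{th2}. With $\cL^{(2)}=0$ the defining equation~(\ref{maurrer cartank}) for $\cL^{(3)}$ reduces to $\delta\cL^{(3)}=0$, and the same holds for every $\cL^{(m)}$, $m\geq 2$; by the result of~\cite{Fialowski99, AgrALO02} that different solutions of the Maurer--Cartan equation produce equivalent deformations, each $\cL^{(m)}$ may be normalised to zero. This shows that any formal deformation is equivalent to its infinitesimal part $\fL_{v_F}+\cL^{(1)}_{v_F}$.

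The only step that could hide a surprise is the enumeration in the first paragraph: one must be sure no off-diagonal odd 1-cocycle of weight difference $\tfrac12$ is overlooked. Once genericity of $\lambda$ is used to exclude $\lambda=0$ and $\lambda=-\tfrac12$, Theorem~\ref{th1} leaves only the diagonal generators, and the remainder of the argument is essentially automatic.
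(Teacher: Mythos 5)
Your proposal is correct and follows essentially the same route as the paper: identify that only the diagonal cocycles $\gamma_{\mu,\mu}$ survive for $n=0,1$ (generic $\lambda$), observe that the cup-products $[\![\gamma_{\mu,\mu},\gamma_{\mu,\mu}]\!]=B_{\mu,\mu}$ vanish (and the cross terms vanish for domain/codomain reasons), conclude $\mathcal{L}^{(2)}=0$, and normalise all higher terms to zero by the equivalence argument of Theorem \ref{th2}(b). Your treatment of the cross terms in the $n=1$ case is in fact slightly more explicit than the paper's, which merely invokes ``the same arguments.''
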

\begin{proof}: Let us consider the $\mathcal{K}(1)$-module
${ \widetilde{\mathcal{S}}}_{\lambda}^{0}$. Any infinitesimal
deformation is given by:\begin{equation}\label{End(S l)}
  \widetilde{\mathcal{L}}_{v_F}=\mathcal{L}_{v_F}+\mathcal{L}_{v_F}^{(1)}
\end{equation}
where $\mathcal{L}_{v_F}$ is the Lie derivative of
$\widetilde{\mathcal{S}}_{\lambda}^{0}$ along the vector field $v_F$
defined by (\ref{superaction}), and
\begin{equation}\label{LvF}
  \mathcal{L}_{v_F}^{(1)}=t_{\lambda,\lambda}\gamma_{\lambda,\lambda},
\end{equation}
\begin{equation}\label{dl2}
    \partial(\mathcal{L}_{v_F}^{(2)})=t_{\lambda,\lambda}^2[\![\gamma_{\lambda,\lambda}
    ,\gamma_{\lambda,\lambda}]\!]
\end{equation}
but, by a direct computation, we show that $
[\![\gamma_{\lambda,\lambda},\gamma_{\lambda,\lambda}]\!]=0$ for all
$\lambda$, then $\partial(\mathcal{L}_{v_F}^{(2)})=0$ and for
consequence $\mathcal{L}_{v_F}^{(2)}=0$.

Now, let us consider the $\mathcal{K}(1)$-module
$\widetilde{\mathcal{S}}_{\lambda+1}^{1}$. Any infinitesimal
deformation is given by:
\begin{equation}
  \widetilde{\mathcal{L}}_{v_F}=\mathcal{L}_{v_F}+\mathcal{L}_{v_F}^{(1)}
\end{equation}
where $\mathcal{L}_{v_F}$ is the Lie derivative of $\widetilde{{
\mathcal{S}}}_{\lambda+1}^{1}$ along the vector field $v_F$ defined
by (\ref{superaction}), and
\begin{equation}
  \mathcal{L}_{v_F}^{(1)}=
  \displaystyle\sum_{j\in \{\frac{1}{2},1\}}t_{\lambda+j,\lambda+j}
  \gamma_{\lambda+j,\lambda+j}.
\end{equation}
By the same arguments, we show in this case that
$\mathcal{L}^{(2)}=0$, then the deformation is infinitesimal.
\end{proof}

\begin{example}\textbf{2.}\label{Example2} Consider  the
$\mathcal{K}(1)$-module $\widetilde{\mathcal{S}}_{\lambda+3}^{3}$.
In this case,$$ \widetilde{\mathcal{S}}_{\lambda+3}^{3}=
\displaystyle\sum_{k=0}^3{\mathfrak{F}}_{(\lambda+3)-\frac{k}{2}}.$$
For $\lambda\neq -2$, the  deformation of this
$\mathcal{K}(1)$-module is of degree 1, given by:\begin{equation*}
  \widetilde{\mathcal{L}}_{v_F}=\mathcal{L}_{v_F}+\mathcal{L}_{v_F}^{(1)}
\end{equation*}
where $\mathcal{L}_{v_F}$ is the Lie derivative of ${
\widetilde{\mathcal{S}}}_{\lambda+3}^{3}$ along the vector field
$v_F$ defined by (\ref{superaction}), $\mathcal{L}_{v_F}^{(1)}$ is
defined as:
\begin{equation*}
\begin{array}{c}
  \mathcal{L}_{v_F}^{(1)}=\displaystyle\sum_{j\in \{\frac{3}{2},2,\frac{5}{2},3\}}
  t_{\lambda+j,\lambda+j}\ \ \gamma_{\lambda+j,\lambda+j}+
  t_{\lambda+\frac{3}{2},\lambda+3} \ \
  \gamma_{\lambda+\frac{3}{2},\lambda+3},
\end{array}
\end{equation*}
\begin{equation*}
\partial({\frak L}^{(2)})=t_{\lambda+3,\lambda+3}
t_{\lambda+\frac{3}{2}, \lambda+3}[\![\gamma_{\lambda+3,
\lambda+3},\gamma_{\lambda+\frac{3}{2}, \lambda+3}]\!]
\end{equation*} and
\begin{equation*}
  {\frak L}^{(2)}=0.
\end{equation*}The conditions of integrability are:
\begin{equation}\label{n=3}
  t_{\lambda+3,\lambda+3}
t_{\lambda+\frac{3}{2}, \lambda+3}=0
\end{equation}where $\lambda+\frac{3}{2}\neq-\frac{1}{2}$ \,i. e.
$\lambda\neq -2$.

Let, in this case (i. e.  $\lambda\neq -2$),  ${ \mathcal{A}}$ be
the supercommutative associative superalgebra defined by the
quotient of $ \mathbb{C}[\![t_{\lambda+3,
\lambda+3},t_{\lambda+\frac{3}{2}, \lambda+3}]\!]$ by the ideal
$\mathcal{R}$  generated by equation (\ref{n=3}). Then, we speak
about a deformation with base ${ \mathcal{A}}$.

For $\lambda=-2$, one has $\partial({\frak L}^{(2)})=0$ then the
deformation of this $\mathcal{K}(1)$-module is equivalent to
infinitesimal one.

\end{example}
\begin{example}\textbf{4.}\label{Example4} Consider  the
$\mathcal{K}(1)$-module $\widetilde{\mathcal{S}}_{\lambda+4}^{4}$.
In this case the deformation of this $\mathcal{K}(1)$-module  has
the form:\begin{equation*}
  \widetilde{\mathcal{L}}_{v_F}=\mathcal{L}_{v_F}+\mathcal{L}_{v_F}^{(1)}
  +\mathcal{L}_{v_F}^{(2)}
\end{equation*}where
\begin{equation*}
\begin{array}{c}
  \mathcal{L}_{v_F}^{(1)}=\displaystyle\sum_{j\in \{2,\frac{5}{2},3,\frac{7}{2},4\}}
  t_{\lambda+j,\lambda+j}\ \ \gamma_{\lambda+j,\lambda+j}+
  \displaystyle\sum_{j\in \{2,\frac{5}{2}\}}
  t_{\lambda+j,\lambda+\frac{3}{2}+j} \ \
  \gamma_{\lambda+j,\lambda+\frac{3}{2}+j} \\+
  t_{\lambda+2,\lambda+4} \ \
  \gamma_{\lambda+2,\lambda+4},
\end{array}
\end{equation*}\begin{equation*}
\begin{array}{c}
  \partial({\frak L}^{(2)})=
t_{\lambda+2,\lambda+\frac{7}{2}}t_{\lambda+2,\lambda+2}
[\![\gamma_{\lambda+2,\lambda+\frac{7}{2}},\gamma_{\lambda+2,\lambda+2}]\!]
+t_{\lambda+\frac{7}{2},\lambda+\frac{7}{2}}t_{\lambda+2,\lambda+\frac{7}{2}}
[\![\gamma_{\lambda+\frac{7}{2},\lambda+\frac{7}{2}},
\gamma_{\lambda+2,\lambda+\frac{7}{2}}]\!] \\\hskip1.5cm
 + t_{\lambda+4,\lambda+4}t_{\lambda+\frac{5}{2},\lambda+4}
[\![\gamma_{\lambda+4,\lambda+4},\gamma_{\lambda+\frac{5}{2},\lambda+4}]\!]
+t_{\lambda+\frac{5}{2},\lambda+4}t_{\lambda+\frac{5}{2},\lambda+\frac{5}{2}}
[\![\gamma_{\lambda+\frac{5}{2},\lambda+4},
\gamma_{\lambda+\frac{5}{2},\lambda+\frac{5}{2}}]\!]\\\hskip1cm
+t_{\lambda+4,\lambda+4}t_{\lambda+2,\lambda+4} [\![
\gamma_{\lambda+4,\lambda+4},\gamma_{\lambda+2,\lambda+4}]\!]
+t_{\lambda+2,\lambda+4}t_{\lambda+2,\lambda+2} [\![
\gamma_{\lambda+2,\lambda+4},\gamma_{\lambda+2,\lambda+2}]\!]\\
\end{array}
\end{equation*}

The conditions of integrability are:
\begin{equation*}\label{n=4}
\begin{array}{c}
  \hskip1cm t_{\mu,\mu+\frac{3}{2}}t_{\mu,\mu}-
  t_{\mu+\frac{3}{2},\mu+\frac{3}{2}}t_{\mu,\mu+\frac{3}{2}} =0
   \hbox{ where } \mu\in\{\lambda+2,\lambda+\frac{5}{2}\} \hbox{ and } \mu\neq -\frac{1}{2},\\
t_{\mu,\mu+2}t_{\mu,\mu}=
  t_{\mu+2,\mu+2}t_{\mu,\mu+2} =0
   \hbox{ where } \mu\in\{\lambda+2\}.\\
\end{array}
\end{equation*}
\end{example}

\textbf{Acknowledgments}

We are grateful to Claude Roger for his constant support.

\end{document}